\newtheorem{theorem}{Theorem}
\newtheorem{lemma}{Lemma}
\newtheorem{proposition}{Proposition}
\newtheorem{definition}{Definition}
\title{Mechanism Design for Maximum Vectors}
\author{
Eric Angel$^1$\footnote{Contact Author}\and
Evripidis Bampis$^2$\\
\affiliations
$^1$Univ Evry, Universit\'e Paris-Saclay, IBISC, Evry, France\\
$^2$Sorbonne Universit\'e, UMR 7606, LIP6, Paris, France\\
\emails
eric.angel@univ-evry.fr,
evripidis.bampis@lip6.fr
}
\begin{document}

\maketitle

\begin{abstract}
We consider the {\em Maximum Vectors problem} in a strategic setting. In the classical setting this problem consists, given a set of $k$-dimensional vectors, in computing the set of all \emph{nondominated} vectors. Recall that a vector $v=(v^1, v^2, \ldots, v^k)$ is said to be nondominated if there does
not e\-xist another vector $v_*=(v_*^1, v_*^2, \ldots, v_*^k)$ such that
$v^l \leq v_*^{l}$ for $1\leq l\leq k$, 
with at least one strict inequality among the $k$ inequalities.
This problem is strongly related to other known problems such as the \emph{Pareto curve computation} in multiobjective optimization. In a strategic setting each vector is owned by a selfish agent which can misreport her values in order to become nondominated by other vectors. Our work explores under which conditions it is
possible to incentivize agents to report their true values
using the algorithmic mechanism design framework. We provide both impossibility results along with positive ones, according to various assumptions.    
\end{abstract}


\section{Introduction}
A great variety of algorithms and methods have been designed for various optimization problems. In classic \emph{Combinatorial Optimization}, the algorithm knows the complete  input of the problem, and its goal is to produce  an optimal or near optimal solution. However, in many modern applications the input of the problem is spread among  a set of \emph{selfish agents}, where eachone owns a different part of the input as private knowledge. Hence, every agent is capable to manipulate the algorithm by miss-reporting its part of the input in order to maximize its personal payoff. In their seminal paper Nisan and Ronen  \cite{NR99} were the first to study the impact of the ``strategic'' behavior of the agents on the difficulty of an optimization problem.
Since then  \emph{Algorithmic Mechanism Design} studies optimization problems in presence of selfish agents with private knowledge of the input and potentially conflicting individual objective functions. The goal is to know whether it is possible to propose a \emph{truthful (or incentive compatible) mechanism}, i.e.,  an algorithm solving the optimization problem together with a set of incentives/payments for the agents  motivating them to report honestly their part of the input.  

As an illustrating example, consider the problem of finding the maximum of a set of values $v_1,v_2,\ldots,v_n$. In the classic setting, computing the maximum value is trivial. Let us consider now the case when the inputs are strategic. It means that 
each of the $n$ selfish agents $i$, for $1\leq i\leq n$,
has a private value $v_i$ (not known to the algorithm) for
being selected (as the maximum), and may report any value $b_i$. If the agents know that the maximum value will be computed using the classic algorithm: $\max_i b_i$, then each agent will have an incentive
to cheat by declaring $+\infty$ instead of her true value.

In such a strategic setting, we need a mechanism that is capable to give incentive to the agents to report eachone their true value. For doing that we may use Vickrey's (also known as \emph{second-price}) mechanism \cite{Vickrey}. In this setting, the maximum $\max_i b_i$ is still
computed, but the agent $i^*$=arg max $b_i$ is charged the second
highest reported value $p^*:=\max_{j \neq i^*} b_j$. Her utility is
therefore $v_{i^*}-p^*$. It can be shown that in such a setting each
agent $i$ will have an incentive to report $b_i=v_i$, and henceforth
this mechanism is able to compute the maximum value in this strategic environment \cite{D16}.

In this paper, we consider the problem of \emph{maximum vectors}, i.e., the problem of finding the maxima of a set of vectors in a strategic environment. The classic problem of computing the maxima of a
set of vectors can be stated as follows: we are given a set $V$ of $n$ $k$-dimensional vectors $v_1,v_2,\ldots,v_n$ with $v_i=(v_i^1, v_i^2, \ldots, v_i^k)$ for $i=1,2,\ldots,n$. Given two vectors
$v_i=(v_i^1, v_i^2, \ldots, v_i^k)$ and 
$v_j=(v_j^1, v_j^2, \ldots, v_j^k)$, we say that $v_i$ is {\em dominated}
by $v_j$ if $v_i^l \leq v_j^{l}$ for $1\leq l\leq k$, 
with at least one strict inequality among the $k$ inequalities. The problem consists in computing $MAX(V)$, i.e. the set of all \emph{nondominated} vectors among the $n$ given vectors. This problem is related to other known problems as the \emph{Pareto curve computation}  in multiobjective optimization \cite{Ehr00,PY,S86},  the \emph{skyline} problem in databases \cite{KRR,PTFS03}, or the \emph{contour} problem \cite{M74}. 
In a ``strategic'' setting  the problem is as follows:
 there are $n$ selfish agents $1,2,\ldots,n$ and the value of
 agent $i$ is described by a  vector $v_i=(v_i^1, v_i^2, \ldots, v_i^k)$ for being \emph{selected}\footnote{An agent is \emph{selected} if its bid belongs to the set of nondominated vectors.}. The vector $v_i$ is  a \emph{private information} known only by agent $i$. Computing the set of nondominated vectors by using one of the classic algorithms gives incentive to the agents to cheat by declaring $+\infty$ in all the coordinates of their vectors instead of their true values per coordinate. Our work explores under which conditions it is
 possible to incentivize agents to report their true values. In order to
 precisely answer this question, it is useful to distinguish two cases. 
 In the strongest case, the mechanism is able to enforce truthtelling
 for each agent regardless of the reports of the others (\emph{truthfulness}).
 In the second case, the mechanism is able to enforce truthtelling for each
 agent assuming that the others report their true values
 (\emph{equilibria truthfulness}).

\emph{Previous works}
The Artificial Intelligence (AI) community is faced with many real-world problems involving multiple, conflicting and noncommensurate objectives in path planning \cite{Donald,Khouadjia,Quemy}, game search \cite{Dasgupta}, preference-based configuration \cite{Benabbou}, ... Modeling such problems using a single scalar-valued criterion may be problematic (see for instance \cite {Zeleny})  and hence multiobjective approaches have been studied in the AI literature \cite{Hart,Mandow}.
Some multiobjective problems have been considered in the mechanism design framework. However, these works apply a budget approach where instead of computing the set of all Pareto solutions (or an approximation of this set), they consider that among the different criteria, one is the main criterion to be optimized while the others are modeled via budget  constraints \cite{Bilo,Grandoni}. 

Another family of related works concern \emph{auction theory}. 
In the 
classical setting, the item as well as the valuation of the bidders are characterized 
by a scalar  representing the price/value of the item. However, in many situations 
an item is characterized, besides of its price, by quality measures, delivery times, 
technical specifications etc.  In such cases, the valuation of the bidders for the 
item are vectors. Auctions where the item to sell or buy are characterized by a vector 
are known as \emph{multi-attribute auctions} \cite{Bellosta,Bellosta2,Bichler,Bichler2,Bichler3,Branco,Che,Smet,desmet-new}.
In most of these works, a scoring rule is used for combining the values of the different attributes in order to determine the winner of the auction.

\emph{Our contribution} We first show that neither truthfulness nor  equilibria truthfulness are achievable. However, if one assumes that the agents have distinct values in each of the dimensions, we show that it
is possible to design an equilibria truthful mechanism for the {\em Maximum Vector problem}. We also show that the payments that our algorithm computes are the only payments that give this guarantee. In order to go beyond the negative result concerning ties in the valuations of agents, we show that it is possible to get an equilibria truthful mechanism for the
{\em Weakly Maximum Vector problem} in which one looks for weakly nondominated vectors instead of nondominated ones \cite{Ehr00}.

\section{Problem definition}\label{sec-prbdef}
The following definition and notations will be useful in the sequel of the paper. 

\begin{definition} Given two vectors $x,y  \in \mathbb{R}_+^k$ we say that:
\begin{itemize}
\item $x$ {\em weakly dominates} $y$, denoted by $x \succeq y$, iff $x^j \ge y^j$ for all $j \in \{1, \ldots ,k\}$;
\item $x$ {\em dominates} $y$, denoted by $x \succ y$, iff $x \succeq y$ and $x^j > y^j$ holds for at least one coordinate $j \in \{1, \ldots ,k\}$;
\item $x$ {\em strongly dominates} $y$, denoted by $x \gg y$, iff
$x^j > y^j$ holds for all coordinates $j \in \{1, \ldots ,k\}$;
\item $x$ and  $y$ are {\em incomparable}, denoted by $x \sim y$, 
iff there exist two coordinates, say $j$ and $j'$, such that $x^j < y^j$ and $x^{j'}  > y^{j'}$.
\end{itemize}
\end{definition}

We denote by $\mathbb{R}^k_{+}$ (resp. $\mathbb{R}^k_{*+}$)
the set of vectors $v\in \mathbb{R}^k$ such that $v\succ \vec{0}$ (resp. 
$v\gg \vec{0}$), with $\vec{0}:=(0,\ldots ,0)$ the zero vector.

Given a set $F \subset \mathbb{R}^k_+$, as stated before, we denote by $MAX(F)$ the subset of all nondominated vectors, i.e. $MAX(F) := \{v\in F\: :\: \not\exists v_*\in F,  v_*\succ v \}$.
Such a set is composed of pairwise incomparable vectors. 
In a similar way, we will denote by $MIN(F)$ the set
$\{v\in F\: :\: \not\exists v_*\in F,  v\succ v_* \}$. We will also consider the subset of all weakly nondominated vectors, i.e. $WMAX(F) := \{v\in F\: :\: \not\exists v_*\in F,  v_*\gg v \}$.


The Maximum Vector problem has been studied in the classical framework, and the following proposition is known:

\begin{proposition}
(from \cite{KLP75})\label{lem-par} The set $MAX(F)$ can be computed in
$O(|F| \log |F|)$ time for $k=2,3$ and at most
$O(|F| (\log |F|)^{k-2})$ for $k\geq 4$.
\end{proposition}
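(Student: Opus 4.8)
The plan is to prove the bound constructively, exhibiting algorithms whose running times match the claimed complexities, and proceeding by induction on the dimension $k$.

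First I would treat the planar case $k=2$. Sort the vectors of $F$ in decreasing order of their first coordinate, breaking ties by decreasing second coordinate; this costs $O(|F|\log|F|)$. Then sweep the sorted list once, maintaining the largest second coordinate $m$ seen so far. A vector $v$ lies in $MAX(F)$ precisely when its second coordinate strictly exceeds $m$ at the moment it is examined, with a small adjustment for equal first coordinates so that among vectors sharing a first coordinate only the one with the largest second coordinate survives. Each vector is processed in $O(1)$, so the scan is linear and the total is $O(|F|\log|F|)$. For $k=3$ I would use a plane sweep: sort $F$ by decreasing first coordinate, process the vectors in this order, and maintain the set $S$ of maxima of the already-processed vectors restricted to their last two coordinates, stored as a staircase in a balanced search tree keyed by the second coordinate. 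When a new $v$ arrives, every previously processed vector has a first coordinate at least that of $v$, so $v\in MAX(F)$ iff $v$ is not weakly dominated in the last two coordinates by any element of $S$; this membership test is a single $O(\log|F|)$ predecessor/successor query. If $v$ survives it is inserted and the contiguous block of elements it now dominates is deleted. Since each vector is inserted and deleted at most once, the amortized work is $O(\log|F|)$ per vector, for $O(|F|\log|F|)$ overall.

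For the general case $k\ge 4$ I would apply multidimensional divide and conquer. Split $F$ by the median value of the last coordinate into a low half $A$ and a high half $B$, each of size about $|F|/2$, and recursively compute $MAX(A)$ and $MAX(B)$ in dimension $k$. No vector of $A$ can dominate a vector of $B$, since every vector of $B$ has last coordinate at least as large; hence $MAX(B)\subseteq MAX(F)$, and by transitivity of $\succ$ a vector of $MAX(A)$ belongs to $MAX(F)$ iff it is not dominated by any vector of $MAX(B)$. Because the last coordinate never helps a vector of $A$ across the split, this survival test reduces to a $(k-1)$-dimensional dominance-filtering problem between $MAX(A)$ and $MAX(B)$, which I would solve by the same divide-and-conquer one dimension lower, bottoming out in the $k=2,3$ routines above. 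Writing $T(n,k)$ for the merge cost after an initial $O(n\log n)$ presort of $F$ along every axis, and $U(n,k)$ for the cost of the filtering step, one obtains $T(n,k)=2T(n/2,k)+U(n,k-1)+O(n)$ and $U(n,k)=2U(n/2,k)+U(n,k-1)+O(n)$ with $U(\cdot,2)=O(n)$; unrolling these yields $T(n,k)=O\bigl(n(\log n)^{k-2}\bigr)$, which together with the one-time sort gives $O(|F|\log|F|)$ for $k=2,3$ and $O\bigl(|F|(\log|F|)^{k-2}\bigr)$ for $k\ge4$.

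The step I expect to be most delicate is the filtering subroutine and its interaction with the recurrence, rather than the two low-dimensional base cases. One must argue carefully that, after projecting out the split coordinate, the cross-comparison between $A$ and $B$ is genuinely a $(k-1)$-dimensional maxima/filtering problem; that ties at the median (and, more generally, repeated coordinate values) are handled by a consistent tie-breaking rule so that the distinction between $\succ$ and $\succeq$ is respected; and that the nested recurrence composes to produce exactly the extra $(\log n)^{k-2}$ factor. Keeping the one-time sorting cost separate from the repeated merges—so that the logarithmic factors accumulate only through the filtering and not through re-sorting at each level—is where the bookkeeping is concentrated.
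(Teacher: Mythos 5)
The paper offers no proof of this proposition---it is quoted directly from Kung, Luccio and Preparata \cite{KLP75}---and your argument is a correct reconstruction of exactly their method: the planar sweep for $k=2$, the balanced-tree staircase for $k=3$, and the multidimensional divide-and-conquer with the nested filtering recurrences $T(n,k)=2T(n/2,k)+U(n,k-1)+O(n)$ and $U(n,k)=2U(n/2,k)+U(n,k-1)+O(n)$, $U(\cdot,2)=O(n)$, which unroll to the stated bounds. So your proposal matches the (cited) source's approach, and the delicate points you flag---tie-breaking at the median and keeping the presort outside the recursion---are indeed the ones that need care but are handled in the standard way.
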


Following the mechanism design framework~\cite{IntroMD}, we aim to design 
a mechanism, that we call {\em Pareto mechanism}, such that no agent has an incentive to misreport its vector in order to increase her utility.
The set of agents is denoted by $N$. 
Each agent $i$ has a private {\em vector} $v_i=(v_i^1, v_i^2, \ldots, v_i^k)$ representing the agent valuations on $k$ numerical criteria for being selected. In the following, we consider that $k$ is a fixed constant.
We denote by $V$ the set of private vectors. 
Each agent $i$ reports a vector (a bid) $b_i=(b_i^1, b_i^2,\ldots, b_i^k)$. We denote by $B$ the set of all reported vectors. Based on the set of reported vectors, the mechanism computes for each agent $i$ a vector-payment $p_i=(p_i^1,p_i^2,\ldots ,p_i^k)$.
For each agent $i$, if $b_i$ belongs to $MAX(B)$ she has to pay $p_i$ and so her utility is $u_i := v_i-p_i$, while if $b_i$ does not belong to $MAX(B)$ her utility is $u_i=\vec{0}$ (zero vector). 
Since no agent has an incentive to misreport her vector in order to increase her utility, we will be able to correctly compute $MAX(V)$ by computing $MAX(B)$ since we will have $MAX(B)=MAX(V)$.

If we consider $WMAX$ instead of $MAX$ we use the term of a weakly Pareto mechanism.

\section{Preliminaries}

The Pareto mechanism we want to design must satisfy several properties. 

\begin{definition}[multiobjective individual rationality]
A Pareto mechanism satisfies the multiobjective individual rationality (MIR)
constraint iff $u_i \succeq \vec{0}$ for all agents $i$.
\end{definition}

By the MIR constraint, it is always better for an agent to participate in
the mechanism (i.e. reports a vector) than not participating. In the following we will always assume that the mechanism satisfies the MIR constraint.

We want that the Pareto mechanism incentivize agents to report their true values. This leads to the two following formal definitions. 

\begin{definition}[multiobjective truthfulness]
For any fixed set of reported vectors $b_{i'}$, $i'\neq i$, let $u_i$ be agent $i$'s utility if she reports $b_i=v_i$ and let $u'_i$ denotes her 
utility if she reports $b_i\neq v_i$ (the reported vectors of all the 
other agents remaining unchanged). A Pareto mechanism is said to be 
multiobjective {\em truthful} iff 
$u_i \succeq u'_i$ or $u_i \sim u'_i$ for any agent $i \in 
 \{1, \ldots, n\}$.
\end{definition}

\begin{definition}[multiobjective equilibria truthfulness]
As in the previous definition, let $u_i$ be agent $i$'s utility if $b_i=v_i$ and let 
$u'_i$ denotes her utility if $b_i\neq v_i$. A Pareto mechanism is said to be 
multiobjective {\em equilibria truthful} iff 
$u_i \succeq u'_i$ or $u_i \sim u'_i$ for any agent $i \in 
 \{1, \ldots, n\}$, assuming that $b_{i'}=v_{i'}$ for all $i'\neq i$.
\end{definition}

Honestly reporting her valuation is a dominant strategy for any agent if the mechanism is truthful. 
 
We will also need some additional definitions in the context of multicriteria
optimization, along with some technical lemmas. The missing proofs can be found in 
the Appendix Section. In the sequel, all sets $S$ have a finite size.

\begin{definition}\label{def-t1}
Let $S\subset \mathbb{R}^k_{*+}$ be a finite set of $k$-dimensional vectors.
We define the {\em reference points}\footnote{This set is known in multiobjective optimization as the set of \emph{local upper bounds} \cite{Vander}.} of $S$, denoted by ${\cal T}(S)$,
as the minimum subset of $\mathbb{R}^k_{+}$ such that
for any $v\in \mathbb{R}^k_{*+}$ with $v\not\in MAX(S)$,
one has $v\in MAX(S\cup \{v\})$ iff 
$\exists t\in {\cal T}(S)$ such that $v\gg t$.
\end{definition}

Such a set can be easily computed in dimension 2. For $k=2$, an example is depicted in 
Figure~\ref{fig-refpoints}.
Let $S\subset \mathbb{R}^2_{*+}$. By Proposition~\ref{lem-par}, we compute $MAX(S) = \{s_1,\ldots ,s_r\}$, where the solutions
$s_i$, $1\leq i\leq r$, are pairwise incomparable.
Without loss of generality we assume that 
$s_1^1 < s_2^1 < \ldots < s_r^1$ and $s_1^2 > s_2^2 > \ldots > s_r^2$.
Then one has ${\cal T}(S)=\{t_1,\ldots t_{r+1}\}$ 
with $t_1 = (0,s_1^2)$, $t_{r+1} = (s_r^1,0)$
and $t_l=(s_{l-1}^1,s_l^2)$ for $2\leq l\leq r$.
The overall complexity to compute ${\cal T}(S)$ is therefore $O(|S| \log |S|)$ in dimension 2.

The existence and uniqueness of such a set for any dimension follows from 
Proposition~\ref{prop-ref}. 
Let $D_S^j:=\{0\} \cup \{s^j \, : \, s\in S\}$ for $j=1,\ldots ,k$,
and $\Omega_S := D_S^1 \times D_S^2 \times \cdots \times D_S^k$.

\begin{proposition} \label{prop-ref}
For any finite set $S\subset \mathbb{R}^k_{*+}$,
one has ${\cal T}(S) = MIN( \{t \in \Omega_S \, : \, 
\forall s \in S, \; s \not \gg t \}).$
\end{proposition}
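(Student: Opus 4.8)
The plan is to prove two things at once: that the concrete set $T^\ast := MIN(A)$, where $A := \{t\in\Omega_S : \forall s\in S,\ s\not\gg t\}$ is exactly the set appearing on the right-hand side, satisfies the defining property of reference points, and that $T^\ast$ is contained in every set satisfying this property. Since $T^\ast$ is itself valid, this shows it is the unique inclusion-minimum such set, giving existence and uniqueness simultaneously. Throughout I would first rewrite the membership condition: for $v\in\mathbb{R}^k_{*+}$ one has $v\in MAX(S\cup\{v\})$ iff no $s\in S$ satisfies $s\succ v$, and under the hypothesis $v\notin MAX(S)$ a short case distinction (on whether $v\in S$) shows this is equivalent to the cleaner condition that no $s\in S$ satisfies $s\succeq v$.

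The heart of the existence part is a rounding map. For $v\in\mathbb{R}^k_{*+}$ define $t(v)\in\Omega_S$ coordinatewise by $t(v)^j := \max\{a\in D_S^j : a< v^j\}$, which is well defined since $0\in D_S^j$ and $v^j>0$. I would first prove the key coordinate identity: for every $s\in S$ and every $j$, $s^j> t(v)^j$ holds iff $s^j\ge v^j$, which uses precisely that $t(v)^j$ is the largest grid value strictly below $v^j$ and that $s^j\in D_S^j$. Conjoining over all coordinates gives $s\gg t(v)\iff s\succeq v$, hence $t(v)\in A \iff$ no $s\succeq v \iff v\in MAX(S\cup\{v\})$. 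It then remains to replace $t(v)$ by a genuine element of $T^\ast$: I would observe that $A$ is upward closed under $\succeq$, that $t(v)$ is the componentwise largest grid point with $v\gg t(v)$, and that every element of $A$ lies above some element of $MIN(A)$ (finiteness of $A$); together these yield $\exists t\in T^\ast,\ v\gg t \iff t(v)\in A$. Chaining the equivalences proves that $T^\ast$ satisfies the defining property.

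For minimality I would first isolate the notion of a \emph{safe} point: $t_0\in\mathbb{R}^k_+$ is safe iff no $s\in S$ has $s\gg t_0$, and show that this is exactly the condition that every $v\gg t_0$ in $\mathbb{R}^k_{*+}$ is nondominated in $S\cup\{v\}$; consequently every point of any valid set $T$ must be safe. Then, to force $T\supseteq T^\ast$, I would take $t^\ast\in T^\ast=MIN(A)$ and probe it with the perturbation $v_\varepsilon := t^\ast + (\varepsilon,\dots,\varepsilon)$ for small $\varepsilon>0$: for $\varepsilon$ small enough one has $v_\varepsilon\notin S$ and $t(v_\varepsilon)=t^\ast\in A$, so $v_\varepsilon\in MAX(S\cup\{v_\varepsilon\})$, and validity of $T$ yields some $t\in T$ with $v_\varepsilon\gg t$. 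Letting $\varepsilon\to0$ along a subsequence (using finiteness of $T$) gives a fixed $t\in T$ with $t^\ast\succeq t$.

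The main obstacle is ruling out $t\prec t^\ast$. Here I would exploit minimality of $t^\ast$ in $A$: if $t^{\ast m}>t^m$ in some coordinate $m$, then $t^{\ast m}>0$, so lowering coordinate $m$ of $t^\ast$ to the previous grid value produces $t^-\prec t^\ast$ which, by minimality of $t^\ast$ in $A$, cannot lie in $A$; thus $s\gg t^-$ for some $s\in S$. Since $t^-$ and $t^\ast$ are consecutive grid values in coordinate $m$ and agree elsewhere, $s^m\ge t^{\ast m}>t^m$ while $s^j> t^{-j}=t^{\ast j}\ge t^j$ for $j\ne m$, so $s\gg t$, contradicting safeness of $t$. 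Hence $t=t^\ast$ and $t^\ast\in T$. This shows $T^\ast\subseteq T$ for every valid $T$; as $T^\ast$ is itself valid and is an antichain, it is the unique inclusion-minimum set $\mathcal{T}(S)$, which is the claimed expression, proving the proposition.
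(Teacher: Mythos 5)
Your proof is correct, and its core mechanism is the same as the paper's: both arguments hinge on rounding a candidate vector $v$ down to the grid $\Omega_S$ (your $t(v)$ is exactly the paper's $\omega$ built from $dec_{S,j}$) and then invoking the $MIN$ operator to descend to an element of the candidate set. Your packaging is cleaner in one respect — the single coordinate identity $s^j>t(v)^j \iff s^j\ge v^j$ yields the biconditional $s\gg t(v)\iff s\succeq v$ in one stroke, whereas the paper proves the two implications of the defining property separately (and routes through $MAX(S)$ via an auxiliary lemma that you do not need). The substantive difference is that you also prove the \emph{minimality} half: Definition~\ref{def-t1} calls $\mathcal{T}(S)$ the minimum subset with the stated property, and the paper's proof only verifies that $MIN(\{t\in\Omega_S : \forall s\in S,\ s\not\gg t\})$ \emph{satisfies} the property, leaving the claim that it is contained in every other valid set (hence that ``the minimum'' is well defined and equals this set) implicit. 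Your probe $v_\varepsilon=t^\ast+(\varepsilon,\dots,\varepsilon)$ together with the consecutive-grid-value argument (lowering one coordinate of $t^\ast$ must break safety, which then forces $s\gg t$ for any strictly smaller safe $t$) closes exactly that gap, so your write-up is strictly more complete than the paper's; the paper's version buys brevity at the cost of leaving the well-definedness of Definition~\ref{def-t1} unargued.
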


Notice that $|\Omega_S| \leq (|S|+1)^k$ and by using Proposition~\ref{lem-par}
we obtain that for any $S\subset \mathbb{R}^k_{*+}$ its set of reference 
points ${\cal T}(S)$ can be computed in polynomial time with respect to
$|S|$ ($k$ is assumed to be a constant).
For example, with $k=3$ and
$S=\{(2,2,2),(1,3,3),(3,1,1)\}$, using Proposition~\ref{prop-ref} one
obtains:
${\cal T}(S)=\{(3,0,0),(2,1,0),(2,0,1),(1,2,0),(1,0,2),(0,3,0),(0,0,3)\}.$

\begin{figure}
\begin{center}
\begin{tikzpicture}[scale=.5]
\tikzstyle{sommet}=[circle,draw,fill=black,minimum size=5pt,inner sep=0pt]
\tikzstyle{rect}=[rectangle,draw,fill=black,minimum size=5pt,inner sep=0pt]
\draw[->] (0,0) -- (8,0);
\draw (7,0) node[below right] {};
\draw[->] (0,0) -- (0,7);
\draw (0,8) node[above] {};
\draw[step=1cm,gray,thin,dotted] (0,0) grid (7,6);
\draw (1,5) node[sommet]{};
\node[above right] at (1,5) {$s_1$};
\draw (3,4) node[sommet]{};
\node[above right] at (3,4) {$s_2$};
\draw (4,2) node[sommet]{};
\node[above right] at (4,2) {$s_3$};
\draw (6,1) node[sommet]{};
\node[above right] at (6,1) {$s_4$};
\draw (0,5) node[rect]{};
\node[above right] at (0,5) {$t_1$};
\draw (1,4) node[rect]{};
\node[above right] at (1,4) {$t_2$};
\draw (3,2) node[rect]{};
\node[above right] at (3,2) {$t_3$};
\draw (4,1) node[rect]{};
\node[above right] at (4,1) {$t_4$};
\draw (6,0) node[rect]{};
\node[above right] at (6,0) {$t_5$};
\end{tikzpicture}
\caption{\label{fig-refpoints}The reference points in dimension 2. One has 
$S=\{s_1,s_2,s_3,s_4\}$ and ${\cal T}(S)=\{t_1,t_2,t_3,t_4,t_5\}$.}
\end{center}
\end{figure}
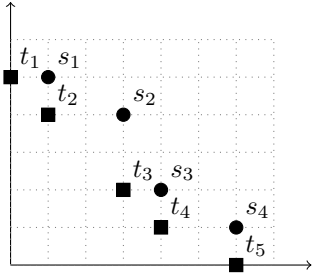

\section{Impossibility results}
Because of the following Proposition, achieving an equilibria truthful Pareto 
mechanism is the best we can hope for.

\begin{proposition}\label{not-truth}A Pareto mechanism cannot be truthful.
\end{proposition}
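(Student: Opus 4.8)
The plan is to exhibit a single, concrete instance on which \emph{no} choice of payments can simultaneously satisfy the MIR constraint and multiobjective truthfulness. Truthfulness (as opposed to equilibria truthfulness) demands that honest reporting be a dominant strategy for every agent regardless of what the others report. So I would fix the reports of all but one agent in an adversarial way and show that this single agent always has a profitable deviation, no matter how the mechanism sets her payments.

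\medskip

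\emph{Construction.} First I would work in dimension $k=2$ (the simplest case; the general statement follows by padding extra coordinates with a large constant so that domination in those coordinates is never the deciding factor). I would consider one agent $i$ whose honest vector is $v_i$, and choose two different ``environments'' for the remaining agents' reports, call them $B_{-i}$ and $B'_{-i}$, engineered so that the \emph{set} of nondominated vectors that $i$ must beat differs between the two. The key tension to exploit is that in a truthful mechanism the payment charged to agent $i$ when her bid is selected cannot depend on her own report (otherwise she can shade her bid to lower her payment while staying selected), yet it \emph{must} depend on the others' reports in order to keep her from being able to report $+\infty$ cheaply. I would pick the two environments so that the payment forced by truthfulness in one environment makes honest reporting strictly dominated in the other.

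\medskip

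\emph{Key steps, in order.} (1) Argue, from truthfulness and MIR, the standard ``payment independence'' fact: for a fixed $B_{-i}$, the payment $p_i$ that agent $i$ incurs when selected must be the same for every winning bid $b_i$ (if two winning bids gave different payments, the agent would misreport toward the cheaper one while remaining selected, violating truthfulness). (2) Characterize, for fixed $B_{-i}$, the ``winning region'' $W(B_{-i}) = \{b_i : b_i \in MAX(B_{-i} \cup \{b_i\})\}$; using Definition~\ref{def-t1} this is exactly the set of $b_i$ that strongly dominate some reference point in ${\cal T}(B_{-i})$, together with the incomparable-but-nondominated points. (3) Build two environments with overlapping winning regions but incompatible forced payments, and then pick a true value $v_i$ lying in the overlap whose utility $v_i - p_i$ under the two forced payments cannot both be weakly preferred (under $\succeq$ or $\sim$) to the zero-utility outcome of losing --- typically by arranging one forced payment to strictly exceed $v_i$ in some coordinate while keeping the other forced payment small, so that in one environment the agent strictly prefers to \emph{lose} (deviate to a dominated bid) whereas the fixed honest payment forces her to a negative-utility win.

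\medskip

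The main obstacle I anticipate is the weakened notion of truthfulness here: the comparison is by the partial order $\succeq$ \emph{or} incomparability $\sim$, so an agent does not strictly prefer a utility vector that is merely incomparable to another. This gives the mechanism extra slack, and a naive single-coordinate argument (as in the scalar Vickrey case) will not immediately close, because the agent cannot always be said to prefer a Pareto-dominating deviation when the two utilities are incomparable. I would therefore have to design the instance so that the profitable deviation yields a utility vector that \emph{strictly dominates} the honest one (giving a clear $\succ$ violation of $u_i \succeq u_i'$), rather than merely an incomparable one; concretely, I expect the cleanest route is to force, in some environment, an honest payment that is nonzero in a coordinate where a selected-with-smaller-payment or a deliberately-losing outcome gives utility that dominates coordinatewise. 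Getting the numbers so that the deviation is a genuine domination, and not just incomparability, is where the care is needed.
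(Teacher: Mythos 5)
There is a genuine gap, and it is structural rather than a matter of ``getting the numbers right.'' Your step (1) --- payment independence for a fixed $b_{-i}$ --- is false in this multiobjective setting. If two winning bids are charged \emph{incomparable} payment vectors, neither deviation gives the agent a utility that dominates the other, so no truthfulness violation arises; the most one can extract from truthfulness and MIR is that payments for two winning bids are either equal or incomparable (this is exactly Lemma~\ref{lem1} in the paper). Indeed the paper's own mechanism $\cal M$ charges bid-dependent payments (different reference points of ${\cal T}(B\setminus\{b_i\})$ for different winning bids) and is nevertheless equilibria-truthful, so any argument resting on forced payment independence cannot be completed. Your closing remarks acknowledge the partial-order slack, but they do not repair step (1), on which the rest of your plan leans.

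The second problem is the two-environment scheme in step (3). Dominant-strategy truthfulness is a family of constraints, one per fixed profile $b_{-i}$: agent $i$ can only vary her own bid, never the environment. Hence a payment ``forced'' in environment $B_{-i}$ has no bearing whatsoever on the agent's incentives in environment $B'_{-i}$, where the mechanism is free to charge something else; ``incompatible forced payments across environments'' is not a violation of anything. The contradiction must be produced inside a single fixed environment, and this is what the paper does: with $b_1=(3,1)$, $b_2=(1,3)$ fixed and agent $3$ bidding $(2,2)$ (a winning bid), MIR forces $p_3\preceq(2,2)$, and a case analysis on $p_3$ closes every possibility. If both coordinates of $p_3$ exceed $1$, an agent with true value $(2,2)$ shades her bid to $((1+p_3^1)/2,(1+p_3^2)/2)$, which still wins and whose MIR-feasible payment is coordinatewise smaller, so her utility \emph{strictly dominates} --- this is precisely the ``genuine domination, not just incomparability'' you were looking for. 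Otherwise some coordinate of $p_3$ is at most $1$, say $p_3\preceq(1,2)$, and an agent with true value $(1,2.5)$ --- who loses and gets $\vec{0}$ if honest --- profitably lies to $(2,2)$ and obtains utility $\succeq(0,0.5)\succ\vec{0}$. To salvage your write-up you would need to discard the payment-independence and cross-environment steps and instead carry out this single-environment case analysis (or an equivalent one) explicitly.
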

\begin{proof}
Let us consider an instance in two dimensions, with three agents reporting $b_1=(3,1)$, 
$b_2=(1,3)$ and $b_3=(2,2)$. Then $b_3\in MAX(B)$, and agent 3 is charged
some payment $p_3\preceq (2,2)$ by the MIR assumption.
Let us define the following region of payments ${\cal R} = ([1,2]\times [1,2])
\setminus \{(1,\alpha)\cup (\alpha,1)\}$ for $1\leq \alpha\leq 2$, depicted 
in Figure~\ref{fig-just}.
We claim that we can not have $p_3\in {\cal R}$.
Indeed if it was the case, then if $v_3=(2,2)$ agent 3's interest would
be to lie and report $b_3=((1+p_3^1)/2,(1+p_3^2)/2)$. She would still be in $MAX(B)$ and get charged $p'_3\preceq b_3\prec p_3$.\\
Now, since $p_3\preceq (2,2)$ and $p_3\not\in {\cal R}$ it means that
either $p_3\preceq (1,2)$ (case 1) or $p_3\preceq (2,1)$ (case 2).
In the first case, if $v_3=(1,2.5)$ then agent 3's interest would
be to lie and report $b_3=(2,2)$. She would belong to $MAX(B)$ and
her utility would be $(1,2.5)-p_3\succ (0,0)$, whereas if she reports $b_3=v_3$
she would not belong to $MAX(B)$ and therefore gets a utility $(0,0)$.
In the second case, in a similar way if $v_3=(2.5,1)$ then agent 3's interest 
would be to lie and report $b_3=(2,2)$. 
\end{proof}

\begin{definition} 
An instance satisfies the {\em DV} property (distinct values) if
for every couple of distinct agents $i$, $i' \in N$,  and
every $j \in \{1,\ldots ,k\}$, $v_i^j \neq v_{i'}^j$ holds.
\end{definition} 

Let us motivate the introduction of this property.

\begin{proposition}\label{dvprop} Without the DV property, a Pareto mechanism cannot be 
equilibria-truthful.
\end{proposition}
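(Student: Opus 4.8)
The plan is to construct a concrete instance in which the DV property is violated---that is, two agents share a common value in some coordinate---and to show that no choice of payments can simultaneously satisfy the MIR constraint and the equilibria truthfulness condition. Following the style of the proof of Proposition~\ref{not-truth}, I would work in two dimensions with a small number of agents, since low-dimensional examples are easiest to reason about geometrically and the negative conclusion only requires a single bad instance. Recall that in the equilibria setting we may assume all other agents report truthfully, so the construction can fix the reports of the ``competitor'' agents and focus entirely on the incentives of one distinguished agent whose true vector ties some competitor in one coordinate.

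First I would design the instance so that the distinguished agent, say agent~$i$ with true vector $v_i$, is exactly on the boundary of being dominated: there is a competing truthful vector that agrees with $v_i$ in one coordinate (the tie forced by the failure of DV) and is weakly but not strictly larger. The crucial point is that under the $MAX$ (strict-domination) rule, a tie on a coordinate leaves $v_i$ in a delicate position---whether $v_i$ survives in $MAX(B)$ can flip depending on an arbitrarily small perturbation in the other coordinate. I would exploit this by comparing two scenarios for agent~$i$'s private value relative to a fixed competitor vector sharing a coordinate. In one scenario truthful reporting keeps her selected (with some payment $p_i \preceq v_i$ forced by MIR), and in the other, a tie causes truthful reporting to yield utility $\vec{0}$ while a small lie makes her selected with a payment that gives her a strictly positive (hence $\succ \vec 0$) utility.

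The key steps, in order, would be: (1)~fix a truthful competitor vector $b_{i'}=v_{i'}$ that creates the tie; (2)~argue from MIR that on any instance where agent~$i$ is selected she is charged some $p_i \preceq b_i$; (3)~identify a candidate true value $v_i$ for which truthful reporting, because of the tie, fails to place $v_i$ in $MAX(B)$---so her truthful utility is $\vec 0$; and (4)~exhibit a profitable deviation $b_i \neq v_i$ (a slight increase in the tied coordinate, or a report dominating the competitor) that lands her in $MAX(B)$ with payment $p_i' \preceq b_i \prec v_i$, giving utility $v_i - p_i' \succ \vec 0$, which contradicts equilibria truthfulness since $\vec 0 \not\succeq v_i-p_i'$ and $\vec0 \not\sim v_i-p_i'$.

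The main obstacle I anticipate is step~(3)--(4): I must choose $v_i$ and the competitor so that the tie genuinely and unavoidably excludes the truthful report from $MAX(B)$, while the deviation just as genuinely includes it, \emph{and} do this robustly against the mechanism's freedom to pick payments. Because the mechanism designer chooses the payments, I cannot assume any particular $p_i$; instead the argument must hold for \emph{every} MIR-consistent payment scheme, which is why leaning on the MIR bound $p_i \preceq b_i$ (exactly as in Proposition~\ref{not-truth}) is essential---it pins the payment down enough to guarantee the deviation is strictly profitable regardless of the designer's choice. Getting the tie to force truthful exclusion is the conceptual crux, since this is precisely the phenomenon that the DV property was introduced to rule out.
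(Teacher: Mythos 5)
Your overall plan is in the right family---the paper's own proof is literally the construction of Proposition~\ref{not-truth} with $b_1=v_1=(3,1)$ and $b_2=v_2=(1,3)$ held truthful, and the DV violation appears exactly where you place it, in the tie between a dominated true value and a truthful competitor. But there is a genuine gap in your step~(4), and it sits precisely at the point you flag as the crux. The profitable deviation must be an \emph{over}-report: if $v_i$ is dominated, any $b_i\prec v_i$ is dominated as well, so the chain $p_i'\preceq b_i\prec v_i$ you write down is unobtainable. Once $b_i\succ v_i$ (e.g.\ true value $(1,2.5)$, deviation $(2,2)$ against competitors $(3,1)$ and $(1,3)$), the MIR bound only gives $p_i'\preceq b_i=(2,2)$, and that is not enough: the MIR-consistent payment $p_i'=(1.5,1.5)$ yields lying utility $(-0.5,1)\sim\vec{0}$, which does not violate the definition of equilibria truthfulness (incomparable utilities are allowed). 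In fact no choice of instance can repair this with MIR alone, because the deviation must exceed the competitor in some coordinate where $v_i$ does not, hence $v_i\not\succeq b_i$, and there is always an MIR-consistent $p_i'\preceq b_i$ (e.g.\ $p_i'=b_i$) with $v_i-p_i'\not\succ\vec{0}$.

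The missing idea is the first stage of the proof of Proposition~\ref{not-truth}: before constructing the DV-violating deviation, one must constrain the payment at the report $(2,2)$ using a \emph{different}, DV-consistent deviation---if that payment lay in the region ${\cal R}=([1,2]\times[1,2])$ minus its left and bottom edges, an agent with true value $(2,2)$ would profitably shade her bid down to $((1+p^1)/2,(1+p^2)/2)$ and still be selected at a strictly smaller payment. Equilibria truthfulness therefore forces $p\preceq(1,2)$ or $p\preceq(2,1)$, and only \emph{then} does the over-reporting deviation from $v_i=(1,2.5)$ (resp.\ $(2.5,1)$) give utility $\succeq(0,0.5)\succ\vec{0}$ for every payment the mechanism is still permitted to charge. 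The argument is thus necessarily two-stage: one truthfulness constraint to pin down the payment, and a second, tie-exploiting one to derive the contradiction. Your plan contains only the second stage.
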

\begin{proof} 
The proof is very similar to the one for Proposition~\ref{not-truth}.
We only need to assume that agents 1 and 2 are reporting their true vectors, i.e.
$b_1=v_1=(3,1)$ and $b_2=v_2=(1,3)$, and notice that the DV assumption does
not hold in cases 1 and 2.  
\end{proof}


\begin{figure}[t]
\begin{center}
\begin{tikzpicture}[scale=.95]
\tikzstyle{sommet}=[circle,draw,fill=black,minimum size=5pt,inner sep=0pt]
\tikzstyle{rect}=[rectangle,draw,fill=black,minimum size=5pt,inner sep=0pt]
\draw[->] (0,0) -- (4,0);
\draw (6,0) node[below right] {};
\draw[->] (0,0) -- (0,4);
\draw (0,6) node[above] {};
\draw[step=1cm,gray,thin,dotted] (0,0) grid (3,3);
\draw (3,1) node[sommet]{};
\node[above right] at (3,1) {$b_1$};
\draw (1,3) node[sommet]{};
\node[above right] at (1,3) {$b_2$};
\node[above right] at (2,2) {$b_3$};
\draw (2,2) node[sommet] (B) {};

\node[left] at (1,2.5) {$v_3$};
\node[below] at (2.5,1) {$v_3$};
\draw (1,2.5) node[sommet] (A1) {};
\draw (2.5,1) node[sommet] (A2) {};
\draw [->] (A1) to [bend left]  node[midway,above,scale=.7]{case 1} (B);
\draw [->] (A2) to [bend right] node[midway,right,scale=.7]{case 2} (B);

\draw[dashed,red,pattern=north east lines,pattern color=red] (1,1) -- 
(2,1) -- (2,2) -- (1,2) -- (1,1);
\draw[color=red] (1,2) -- (2,2);
\draw[color=red] (2,1) -- (2,2);

\node[below] at (1,0) {$1$}; \node[below] at (2,0) {$2$};
\node[below] at (3,0) {$3$};
\node[left] at (0,1) {$1$}; \node[left] at (0,2) {$2$};
\node[left] at (0,3) {$3$};

\end{tikzpicture}
\caption{\small \label{fig-just}Illustration of the proof of Proposition~\ref{not-truth}.}
\end{center}
\end{figure}
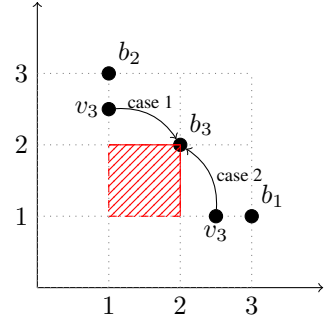

\section{A Pareto mechanism for the Maximum  Vector  problem}
We are going to present a Pareto mechanism, denoted by $\cal M$, which 
satisfies the MIR constraint and which is equilibria-truthful under the hypothesis DV.
The mechanism is described in Algorithm~\ref{figpmeca}. In the initial step, 
we remove all identical vectors. This means that if there is a set of 
agents with the same reported vector $b$, this vector is removed from the set $B$ and
all such agents will not be considered anymore in the mechanism.
Notice that this case will not occur, since we are in the context of
a equilibria truthfulm mechanism and we have the DV assumption. Not having
two identical vectors is a formal requirement used in the proof of 
Lemma~\ref{lem-prop1}.
The mechanism computes for all agents $i$ such that $b_i\in MAX(B)$ a set of possible payments, denoted by $PAY(i)$, and can charge agent $i$ any payment from this set.

\begin{algorithm}
\begin{algorithmic}[1]
\STATE Remove all identical vectors and corresponding agents.
\STATE Compute $MAX(B)$.
\STATE For all $b_i\in MAX(B)$, set $PAY(i) := \{t \in  
{\cal T}(B\setminus \{b_i\}) \: |\: b_i \succeq t\}$, and choose any 
 $p_i\in PAY(i)$.
\STATE For all $b_i\notin MAX(B)$, we set $p_i=\vec{0}$.
\caption{\label{figpmeca}The Pareto mechanism ${\cal M}$.}
\end{algorithmic}
\end{algorithm}

\begin{lemma} \label{lem-prop1}
For any agent $i$ such that $b_i\in MAX(B)$, one has $PAY(i)\neq  \emptyset$.
\end{lemma}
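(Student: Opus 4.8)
The plan is to exhibit an explicit reference point of $S := B\setminus\{b_i\}$ that is weakly dominated by $b_i$; by the definition $PAY(i)=\{t\in{\cal T}(S) \mid b_i\succeq t\}$ this immediately yields $PAY(i)\neq\emptyset$. Throughout I would work with the characterization ${\cal T}(S)=MIN(\{t\in\Omega_S : \forall s\in S,\ s\not\gg t\})$ supplied by Proposition~\ref{prop-ref}, so the task reduces to producing a single $t\in\Omega_S$ with $b_i\succeq t$ and $s\not\gg t$ for all $s\in S$, and then pushing it down to a minimal such point.

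First I would construct the coordinatewise floor of $b_i$ onto the grid $\Omega_S$: for each $j\in\{1,\ldots,k\}$ set $t^{*j}:=\max\{d\in D_S^j : d\le b_i^j\}$. This maximum is attained because $D_S^j$ is finite and contains $0\le b_i^j$; hence $t^*\in\Omega_S$, and by construction $b_i\succeq t^*$.

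The crucial step is to verify that no $s\in S$ strongly dominates $t^*$. Suppose for contradiction that $s\gg t^*$ for some $s\in S$, i.e. $s^j>t^{*j}$ for every $j$. Fix $j$: since $s^j\in D_S^j$, if we had $s^j\le b_i^j$ then $s^j$ would be an element of $D_S^j$ not exceeding $b_i^j$ and strictly larger than $t^{*j}$, contradicting the maximality defining $t^{*j}$; therefore $s^j>b_i^j$. As this holds in every coordinate we get $s\gg b_i$, so in particular $s\succ b_i$, contradicting $b_i\in MAX(B)$. Hence $t^*$ belongs to $F:=\{t\in\Omega_S : \forall s\in S,\ s\not\gg t\}$, which is therefore nonempty.

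Finally, since $F$ is finite and nonempty, every element of $F$ weakly dominates some element of $MIN(F)$; in particular there is a $t'\in MIN(F)={\cal T}(S)$ with $t'\preceq t^*$. Combining this with $t^*\preceq b_i$ gives $b_i\succeq t'$, so $t'\in PAY(i)$ and $PAY(i)\neq\emptyset$. The degenerate case $S=\emptyset$ is absorbed by the same argument, since then $\Omega_S=\{\vec{0}\}$ and $t^*=\vec{0}$. I expect the main obstacle to be the floor construction and the accompanying contradiction: one must argue carefully that $s^j>t^{*j}$ upgrades to $s^j>b_i^j$ in each coordinate, which is exactly where the maximality of the floor and the hypothesis $b_i\in MAX(B)$ interact; the passage from $t^*$ to a minimal element and the appeal to Proposition~\ref{prop-ref} are then routine.
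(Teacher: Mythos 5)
Your proof is correct and follows essentially the same route as the paper's: round $b_i$ down onto the grid $\Omega_{B\setminus\{b_i\}}$, use the maximality of the floor together with $b_i\in MAX(B)$ to show that no vector of $B\setminus\{b_i\}$ strongly dominates the rounded point, and then pass to a minimal element of the resulting set, which by Proposition~\ref{prop-ref} is a reference point weakly dominated by $b_i$. The only (minor but slightly advantageous) difference is that you take the weak floor ($\le$) over the grid built from $B\setminus\{b_i\}$, so the contradiction upgrades directly to $s\gg b_i$ and your argument does not even need the preprocessing assumption that no two reported vectors are identical, whereas the paper's strict floor over the full $B$-grid only yields $b_z\succeq b_i$ and must invoke that assumption to conclude $b_z\succ b_i$.
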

\begin{proof} 
We need the following notations.
Let $D_B^j:=\{0\} \cup \{b^j \, : \, b\in B\}$ for $j=1,\ldots ,k$.
Given $a \in \mathbb{R}_{*+}$, let us denote by
$dec_{B,j}(a)$ the quantity $\max\{x \in D_B^j \, : \, x<a\}$.\\
Let us consider a vector $r =(dec_{B,1}(b_i^1),\ldots,dec_{B,k}(b_i^k))$.
If there exists an agent $z\neq i$ such that $b_z \gg r$ then
$b_z \succ b_i$ (since no two reported vectors are identical), which is a 
contradiction with $b_i \in MAX(B)$.
Therefore $r \in \{t \in \Omega_{B\setminus\{b_i\}} \, : \, \forall s \in B
 \setminus \{b_i\}, \, s \not \gg t\}$.
By definition of the $MIN$ operator, there exists
$w \in MIN(\{t \in \Omega_{B\setminus\{b_i\}} \, : \, \forall s \in B
 \setminus \{b_i\}, \, s \not \gg t\}) = 
{\cal T}(B \setminus \{b_i\})$ such that $w \preceq r$. 
Since $r \prec b_i$ (by construction), we get $w \prec b_i$. 
Therefore $w \in PAY(i)$ and $PAY(i)\neq \emptyset$.
\end{proof}

\begin{theorem} The Pareto mechanism ${\cal M}$ satisfies the MIR constraint.
\label{theo-mir}
\end{theorem}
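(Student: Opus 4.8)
The plan is to show that for every agent $i$ the utility $u_i \succeq \vec{0}$, which by definition means $p_i \preceq v_i$ coordinatewise (recalling $u_i = v_i - p_i$ when selected, and $u_i = \vec{0}$ otherwise, the latter trivially satisfying MIR). First I would dispose of the easy case: if $b_i \notin MAX(B)$, then by Step~4 of Algorithm~\ref{figpmeca} we have $p_i = \vec{0}$, so $u_i = \vec{0} \succeq \vec{0}$ and MIR holds.

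The substantive case is $b_i \in MAX(B)$. Here the mechanism charges some $p_i \in PAY(i)$, and by Lemma~\ref{lem-prop1} this set is nonempty, so a payment exists. By the definition of $PAY(i)$ in Step~3, any chosen $p_i$ satisfies $b_i \succeq p_i$, i.e. $p_i^j \le b_i^j$ for every coordinate $j$. Crucially, this gives a bound in terms of the \emph{bid} $b_i$, not the private value $v_i$. The key observation is that MIR is a property asserted about the mechanism's behavior on reported vectors, and the relevant notion of utility in the MIR constraint is evaluated with the agent's true valuation substituted where the mechanism uses the bid. Since the definition of truthfulness/equilibria-truthfulness we aim for will justify $b_i = v_i$, the honest report makes $v_i = b_i \succeq p_i$, hence $u_i = v_i - p_i \succeq \vec{0}$.

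The main subtlety I would flag is exactly this relationship between $b_i$ and $v_i$: MIR as defined constrains $u_i = v_i - p_i$, whereas $PAY(i)$ only directly guarantees $p_i \preceq b_i$. I would therefore argue that MIR should be read as a guarantee for the agent who reports honestly (participation is beneficial relative to abstaining), so that $v_i = b_i$ and the containment $p_i \preceq b_i = v_i$ immediately yields $u_i \succeq \vec{0}$. Alternatively, and more robustly, one can observe that whenever an agent is selected with bid $b_i \in MAX(B)$ the payment satisfies $p_i \preceq b_i$, so the mechanism never charges more than the agent reported in any coordinate; combined with honest reporting this is precisely individual rationality.

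In summary, the proof splits into the trivial non-selected case and the selected case, with the selected case resting entirely on the inequality $p_i \preceq b_i$ built into the definition of $PAY(i)$ together with nonemptiness from Lemma~\ref{lem-prop1}. The expected obstacle is not computational but definitional: making precise that the bid equals the true value under honest participation, so that $p_i \preceq b_i$ translates into $p_i \preceq v_i$ and hence $u_i \succeq \vec{0}$.
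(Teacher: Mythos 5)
Your proof is correct and follows essentially the same route as the paper: the trivial non-selected case, then nonemptiness of $PAY(i)$ via Lemma~\ref{lem-prop1} and the built-in containment $p_i \preceq b_i$ from Step~3. Your explicit discussion of the $b_i$ versus $v_i$ issue is a reasonable clarification of a point the paper leaves implicit, but it does not change the argument.
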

\begin{proof} 
Given an agent $i$ such that $b_i\not\in MAX(B)$ her utility $u_i$ is the zero vector $\vec{0}$ by definition. 
Now given an agent $i$ such that $b_i\in MAX(B)$, we have to prove that $b_i \succeq p_i$ for all
$p_i \in PAY(i)$.  By Lemma~\ref{lem-prop1}, $PAY(i)$ is nonempty, and according
to the mechanism $\cal M$, $PAY(i)$ contains vectors dominated by $b_i$, thus 
the property holds.
\end{proof}

In what follows, we use the following standard notation in game theory. 
Given a set of reported vectors $B := \{b_j\:|\:j\in N\}$, we denote by 
$(b_{-i},v_i)$ the set in which each agent $j\in N\setminus \{i\}$
reports $b_j$, excepted the agent $i$ which reports $v_i$ instead, and we denote by
$(b_{-i},b_i)$ the set in which each agent $j\in N$ reports $b_j$ 
including the agent $i$ which reports $b_i$.

Before proving Theorem~\ref{theo-eqt} we need the following two lemmas:

\begin{lemma}\label{lem-la}
Let $S\subset \mathbb{R}^k_{*+}$ be a finite set.
Then, $\forall t\in {\cal T}(S)$ and $\forall j\in \{1,\ldots k\}$, then
$t^j=0$ or $\exists s\in S$ such that $t^j=s^j$.
\end{lemma}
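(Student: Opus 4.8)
The goal is to show that each reference point $t \in {\cal T}(S)$ has, in every coordinate $j$, either $t^j = 0$ or a value $t^j = s^j$ matching some $s \in S$. The plan is to leverage the explicit characterization of ${\cal T}(S)$ from Proposition~\ref{prop-ref}, namely ${\cal T}(S) = MIN(\{t \in \Omega_S : \forall s \in S, s \not\gg t\})$, where $\Omega_S = D_S^1 \times \cdots \times D_S^k$ and each $D_S^j = \{0\} \cup \{s^j : s \in S\}$.

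First I would note that the coordinate condition is built into $\Omega_S$ almost for free: every $t \in \Omega_S$ satisfies $t^j \in D_S^j$ for each $j$, which is exactly the statement that $t^j = 0$ or $t^j = s^j$ for some $s \in S$. Since ${\cal T}(S) = MIN(\ldots) \subseteq \{t \in \Omega_S : \ldots\} \subseteq \Omega_S$, every $t \in {\cal T}(S)$ lies in $\Omega_S$, and the conclusion is immediate from the membership $t^j \in D_S^j$.

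So the main work is just to invoke Proposition~\ref{prop-ref} cleanly and then unfold the definitions of $MIN$, $\Omega_S$ and $D_S^j$. Concretely I would write: fix $t \in {\cal T}(S)$ and $j \in \{1,\ldots,k\}$. By Proposition~\ref{prop-ref}, $t$ belongs to $MIN(\{t \in \Omega_S : \forall s\in S,\ s\not\gg t\})$, and in particular $t \in \Omega_S$. By definition of $\Omega_S$ as the product of the sets $D_S^j$, we have $t^j \in D_S^j = \{0\} \cup \{s^j : s \in S\}$. Hence either $t^j = 0$ or there exists $s \in S$ with $t^j = s^j$, which is the claim.

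There is essentially no hard step here: the lemma is a structural observation that falls out directly from the discrete product structure of $\Omega_S$. If any subtlety arises it would be purely bookkeeping — making sure the $MIN$ operator does not introduce coordinate values outside $\Omega_S$ — but since $MIN(F) \subseteq F$ by definition for any $F$, this cannot happen. The only genuine dependency is Proposition~\ref{prop-ref}, which supplies the explicit form of ${\cal T}(S)$; without it one would have to argue from the abstract Definition~\ref{def-t1}, but since the proposition is already established I would simply cite it.
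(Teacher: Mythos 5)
Your proof is correct, and it is a legitimately different route from the paper's. The paper dispatches Lemma~\ref{lem-la} (together with Lemma~\ref{lem-lt}) as ``a direct consequence of Proposition~\ref{prop-ref2}'', i.e.\ the characterization ${\cal T}(S) = MIN(\cup\, VMAX(\cup_{s\in MAX(S)} s(j_s)))$: each $VMAX(\cup_s s(j_s))$ has, in every coordinate $j$, either $0$ or the value $s^j$ of one of the contributing vectors, and $MIN$ only selects among these, so the claim follows. You instead go through Proposition~\ref{prop-ref}, observing that ${\cal T}(S) = MIN(\{t\in\Omega_S : \forall s\in S,\ s\not\gg t\}) \subseteq \Omega_S = D_S^1\times\cdots\times D_S^k$ and that membership of $t^j$ in $D_S^j=\{0\}\cup\{s^j : s\in S\}$ is literally the statement to be proved. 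Both arguments are sound and essentially one line once the relevant proposition is in hand; there is no circularity in your version since Proposition~\ref{prop-ref} is established independently. Your route is arguably the more economical one for this lemma in isolation, while the paper's choice of Proposition~\ref{prop-ref2} is motivated by the fact that the same tool simultaneously yields Lemma~\ref{lem-lt} (pairwise incomparability of the reference points), which does not fall out of the $\Omega_S$ product structure as immediately.
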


\begin{lemma}\label{lem-lt}
Let $S\subset \mathbb{R}^k_{*+}$ be a finite set.
Then, ${\cal T}(S)$ is composed of mutually noncomparable vectors, i.e.
$\forall t,t'\in {\cal T}(S)$, one has $t=t'$ or $t\sim t'$.
\end{lemma}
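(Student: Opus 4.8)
The plan is to reduce the statement to a general property of the $MIN$ operator. By Proposition~\ref{prop-ref} we have $\mathcal{T}(S) = MIN(G)$, where $G := \{t \in \Omega_S \, : \, \forall s \in S,\, s \not\gg t\}$. Hence it suffices to observe that for any finite set $F \subset \mathbb{R}^k_+$ the set $MIN(F)$ consists of pairwise noncomparable vectors; the lemma then follows immediately by specializing to $F = G$.

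First I would record the trichotomy for distinct vectors: for any $t \neq t'$ in $\mathbb{R}^k_+$, exactly one of $t \succ t'$, $t' \succ t$, or $t \sim t'$ holds. This is seen by partitioning the coordinates $j$ into those with $t^j > t'^j$ and those with $t^j < t'^j$. Since $t \neq t'$, at least one of these two sets is nonempty; if only one is nonempty we get (weak, hence strict) domination in the corresponding direction, while if both are nonempty the defining condition of $\sim$ is met.

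Next I would take two distinct elements $t, t' \in \mathcal{T}(S) = MIN(G)$ and eliminate the two domination cases. Suppose $t \succ t'$. Since both $t$ and $t'$ lie in $G$, this exhibits an element $t \in G$ dominating another element $t' \in G$, which by the very definition of $MIN$ contradicts $t \in MIN(G)$. The case $t' \succ t$ is symmetric and contradicts $t' \in MIN(G)$. By the trichotomy the only remaining possibility is $t \sim t'$, which is exactly the claim.

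I do not expect a substantial obstacle here: the result is essentially the fact that the minimal elements of a finite poset form an antichain, transported through Proposition~\ref{prop-ref}. The one point requiring care is confirming that incomparability ($\sim$) is precisely the negation of ``one of the two vectors weakly dominates the other,'' so that the trichotomy is genuinely exhaustive and ruling out the two comparable cases truly forces $t \sim t'$.
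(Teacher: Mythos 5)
Your proof is correct and matches the paper's approach in substance: the paper dispatches this lemma as ``a direct consequence of Proposition~\ref{prop-ref2},'' which, like your route through Proposition~\ref{prop-ref}, expresses ${\cal T}(S)$ as the $MIN$ of some finite set and then relies on exactly the antichain property of $MIN$ that you spell out. Your version simply makes explicit the trichotomy and the contradiction with the definition of $MIN$ that the paper leaves implicit.
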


\begin{theorem}\label{theo-eqt}The Pareto mechanism ${\cal M}$ is equilibria-truthful.
\end{theorem}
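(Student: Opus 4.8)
The plan is to fix an agent $i$ and exploit the hypothesis that every other agent reports truthfully, so that the set $S := \{v_{i'} : i' \neq i\}$ — and hence its reference-point set ${\cal T}(S)$ — is the \emph{same} no matter what agent $i$ reports. Writing $u_i = v_i - p_i$ for the truthful utility and $u'_i$ for the utility under a deviation $b_i \neq v_i$, equilibria-truthfulness amounts exactly to ruling out $u'_i \succ u_i$ (this is the only case excluded by ``$u_i \succeq u'_i$ or $u_i \sim u'_i$''). A deviation that is removed in Step~1 of Algorithm~\ref{figpmeca} or that leaves $b_i \notin MAX(B)$ yields $u'_i = \vec{0}$, which can never strictly dominate $u_i$ because Theorem~\ref{theo-mir} gives $u_i \succeq \vec{0}$; so only deviations in which $b_i$ is selected need attention, and for such a deviation $p'_i \in {\cal T}(S)$ with $b_i \succeq p'_i$.

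First I would record the key consequence of the DV hypothesis: for every $t \in {\cal T}(S)$ one has $v_i \succeq t \Rightarrow v_i \gg t$. Indeed, by Lemma~\ref{lem-la} each coordinate $t^j$ is either $0$ or equals $s^j$ for some $s \in S$; since $v_i \in \mathbb{R}^k_{*+}$ we have $v_i^j > 0 = t^j$ in the first case, while in the second case DV forces $v_i^j \neq s^j = t^j$. Hence $v_i^j \neq t^j$ in all coordinates, so $v_i \succeq t$ (a weak inequality in every coordinate) is possible only with strict inequality everywhere, that is, $v_i \gg t$.

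Then I would split on whether $v_i$ is selected when reported truthfully. If $v_i \in MAX(S \cup \{v_i\})$, then $p_i \in {\cal T}(S)$; for any selected deviation the payment $p'_i$ also lies in ${\cal T}(S)$, and a direct manipulation turns $u'_i \succ u_i$, i.e. $v_i - p'_i \succ v_i - p_i$, into $p_i \succ p'_i$. But ${\cal T}(S)$ is an antichain by Lemma~\ref{lem-lt}, so two of its members never satisfy $\succ$; hence $u'_i \not\succ u_i$. If instead $v_i \notin MAX(S \cup \{v_i\})$, then $u_i = \vec{0}$, and Definition~\ref{def-t1} tells us there is no $t \in {\cal T}(S)$ with $v_i \gg t$; by the DV fact there is then no $t$ with $v_i \succeq t$, and a fortiori none with $v_i \succ t$. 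Consequently any selected deviation gives $u'_i = v_i - p'_i \not\succ \vec{0} = u_i$, which closes this case.

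The main obstacle is the second case, and precisely the DV fact that underlies it: without distinct values a reference point $t$ could tie one coordinate of $v_i$, allowing $v_i \succ t$ while $v_i \not\gg t$, which is exactly the profitable deviation exhibited in Proposition~\ref{dvprop}. Getting the relation $v_i \succeq t \Rightarrow v_i \gg t$ right — and verifying that ${\cal T}(S)$ is genuinely independent of agent $i$'s report, so that the truthful payment $p_i$ and every deviation payment $p'_i$ live in the \emph{same} antichain ${\cal T}(S)$ — are the two points that carry the whole argument.
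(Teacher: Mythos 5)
Your proposal is correct and follows essentially the same route as the paper's proof: the same case split on whether $v_i \in MAX(S \cup \{v_i\})$, with Lemma~\ref{lem-lt} (the antichain property of ${\cal T}(S)$) handling the selected case and Lemma~\ref{lem-la} plus DV handling the unselected case; your reformulation of the latter as ``$v_i \succeq t \Rightarrow v_i \gg t$'' is logically equivalent to the paper's ``$v_i \sim t$ or $t \gg v_i$.'' Your preliminary reduction to ruling out $u'_i \succ u_i$ is a clean and valid packaging of the definition, but it does not change the substance of the argument.
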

\begin{proof}
Let $u_i$ be agent $i$'s utility if $b_i=v_i$ and let $u'_i$ denotes her 
utility if $b_i\neq v_i$.
We need to show that $u_i \succeq u'_i$ or $u_i \sim u'_i$.
Recall that we always assume that $b_i \gg \vec{0}$ and $v_i \gg \vec{0}$.\\
We have two cases to consider. First, let assume that 
$v_i\not\in MAX(b_{-i}, v_i)$.
The utility $u_i$ of agent $i$ is $\vec{0}$.  In that case, agent $i$ may have an 
incentive to report a vector $b_i\neq v_i$ such that $b_i\in MAX(b_{-i},b_i)$. 
According to the mechanism $\cal M$, agent $i$ will be charged some
$t\in {\cal T}(B\setminus \{b_i\})$.
Since $v_i\not\in MAX(b_{-i}, v_i)$ we get from 
Definition~\ref{def-t1} with $S=B\setminus \{b_i\}$ and $v=v_i$
that $v_i\not\gg t$. From the DV (distinct values) hypothesis
and Lemma~\ref{lem-la} and using that $v_i\gg \vec{0}$, we can conclude 
that either $v_i\sim t$ t or $t\gg v_i$. Indeed, if it was not the case, then 
$v_i\not\gg t$, $v_i\not\sim t$ t and $t\not\gg v_i$ implies that 
$\exists j$ such that $v_i^j = t^j$ and by Lemma~\ref{lem-la} we know that
either $t^j=0$ or $\exists s\in B\setminus \{b_i\}$ such that $t^j=s^j$, meaning that
either $v_i^j=0$ or $v_i^j=s^j$ with $s$ the reported vector of a agent different from $i$.
But recall that we have assumed that $v_i \gg \vec{0}$ and moreover since the other agents
report their true values and by the DV hypothesis this is not possible.
This case is illustrated in Figure~\ref{fig-eqt-c1}.
Therefore, the utility
$u'_i = v_i - t$ of agent $i$ will satisfy $\vec{0}\sim u'_i$ or $\vec{0}\gg u'_i$.\\
Assume now that $v_i\in MAX(b_{-i}, v_i)$. According to the mechanism
${\cal M}$, if agent $i$ declares her true value, she will be charged some 
$t$ such that $t\in {\cal T}(B\setminus \{b_i\})$. Her utility $u_i$ will 
be $v_i-t$.
If agent $i$ reports $b_i$ such that $b_i\in MAX(b_{-i},b_i)$, then
she will be charged $t'$ for some
$t'\in {\cal T}(B\setminus \{b_i\})$ and her utility $u'_i$ will be
$v_i - t'$. Since by Lemma~\ref{lem-lt} one has $t=t'$ or $t\sim t'$, the
utilitie will satisfy $u'_i=u_i$ or $u'_i\sim u_i$.
This case is illustrated in Figure~\ref{fig-eqt-c2}.
Finally if agent $i$ reports $b_i$ such that $b_i\not\in MAX(b_{-i},b_i)$,
her utility will be zero, i.e. $u'_i=\vec{0}$ whereas $u_i\succeq \vec{0}$ according to
Theorem~\ref{theo-mir}.

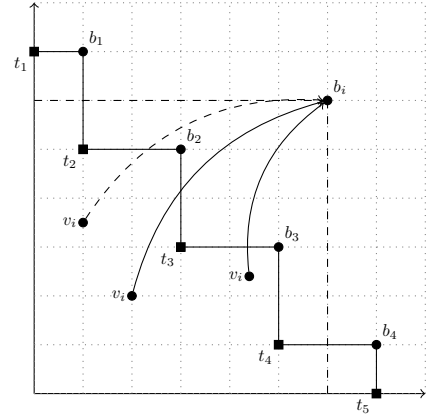
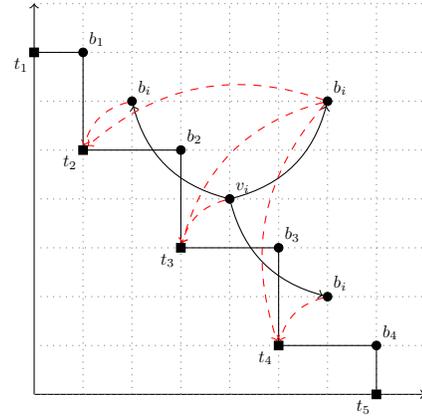
\begin{figure}[t]
\begin{center}
\subfloat[\label{fig-eqt-c1}First Case - Here $t$ can be either $t_2$, $t_3$ 
or $t_4$. The situation corresponding to the dashed line for which one has 
neither $v_i\sim t_2$ nor $t_2\gg v_i$ is not possible.]{
\begin{tikzpicture}[scale=.65, every node/.style={transform shape}]
\tikzstyle{sommet}=[circle,draw,fill=black,minimum size=5pt,inner sep=0pt]
\tikzstyle{rect}=[rectangle,draw,fill=black,minimum size=5pt,inner sep=0pt]
\draw[->] (0,0) -- (8,0);
\draw[->] (0,0) -- (0,8);
\draw[step=1cm,gray,thin,dotted] (0,0) grid (8,8);
\draw (1,7) node[sommet]{}; \node[above right] at (1,7) {$b_1$};
\draw (3,5) node[sommet]{}; \node[above right] at (3,5) {$b_2$};
\draw (5,3) node[sommet]{}; \node[above right] at (5,3) {$b_3$};
\draw (7,1) node[sommet]{}; \node[above right] at (7,1) {$b_4$};
\draw (6,6) node[sommet] (B) {}; \node[above right] at (6,6) {$b_i$};
\draw[dashed] (0,6) -- (6,6); \draw[dashed] (6,0) -- (6,6);
\draw (0,7) node[rect]{}; \node[below left] at (0,7) {$t_1$};
\draw (1,5) node[rect]{}; \node[below left] at (1,5) {$t_2$};
\draw (3,3) node[rect]{}; \node[below left] at (3,3) {$t_3$};
\draw (5,1) node[rect]{}; \node[below left] at (5,1) {$t_4$};
\draw (7,0) node[rect]{}; \node[below left] at (7,0) {$t_5$};
\draw (0,7) -- (1,7) -- (1,5) -- (3,5) -- (3,3) -- (5,3) -- (5,1) -- (7,1) -- (7,0);
\draw (4.4,2.4) node[sommet] (A) {}; \node[left] at (4.4,2.4) {$v_i$};
\draw (2,2) node[sommet] (A2) {}; \node[left] at (2,2) {$v_i$};
\draw (1,3.5) node[sommet] (A3) {}; \node[left] at (1,3.5) {$v_i$};
\draw[->] (A) to [bend left] (B);
\draw[->] (A2) to [bend left] (B);
\draw[dashed,->] (A3) to [bend left] (B);
\end{tikzpicture}
} \ \ \ \ 
\subfloat[\label{fig-eqt-c2}Second Case - The various possible payments 
according to the vector the agent $i$ declares are indicated with a dashed
line.]{
\begin{tikzpicture}[scale=.65, every node/.style={transform shape}]
\tikzstyle{sommet}=[circle,draw,fill=black,minimum size=5pt,inner sep=0pt]
\tikzstyle{rect}=[rectangle,draw,fill=black,minimum size=5pt,inner sep=0pt]
\draw[->] (0,0) -- (8,0);
\draw[->] (0,0) -- (0,8);
\draw[step=1cm,gray,thin,dotted] (0,0) grid (8,8);
\draw (1,7) node[sommet]{}; \node[above right] at (1,7) {$b_1$};
\draw (3,5) node[sommet]{}; \node[above right] at (3,5) {$b_2$};
\draw (5,3) node[sommet]{}; \node[above right] at (5,3) {$b_3$};
\draw (7,1) node[sommet]{}; \node[above right] at (7,1) {$b_4$};
\draw (0,7) node[rect]{}; \node[below left] at (0,7) {$t_1$};
\draw (1,5) node[rect] (T2) {}; \node[below left] at (1,5) {$t_2$};
\draw (3,3) node[rect] (T3) {}; \node[below left] at (3,3) {$t_3$};
\draw (5,1) node[rect] (T4) {}; \node[below left] at (5,1) {$t_4$};
\draw (7,0) node[rect]{}; \node[below left] at (7,0) {$t_5$};
\draw (0,7) -- (1,7) -- (1,5) -- (3,5) -- (3,3) -- (5,3) -- (5,1) -- (7,1) -- (7,0);
\draw (4,4) node[sommet] (B0) {}; \node[above right] at (4,4) {$v_i$};
\draw (2,6) node[sommet] (B1) {}; \node[above right] at (2,6) {$b_i$};
\draw (6,6) node[sommet] (B2) {}; \node[above right] at (6,6) {$b_i$};
\draw (6,2) node[sommet] (B3) {}; \node[above right] at (6,2) {$b_i$};
\draw[->] (B0) to [bend left] (B1);
\draw[->] (B0) to [bend right] (B2);
\draw[->] (B0) to [bend right] (B3);
\draw[red,dashed,->] (B1) to [bend right] (T2);
\draw[red,dashed,->] (B3) to [bend right] (T4);
\draw[red,dashed,->] (B0) to [bend right] (T3);
\draw[red,dashed,->] (B2) to [bend right] (T2);
\draw[red,dashed,->] (B2) to [bend right] (T3);
\draw[red,dashed,->] (B2) to [bend right] (T4);
\end{tikzpicture}
}
\caption{\small \label{fig-eqt}Illustration of the proof of Theorem~\ref{theo-eqt}.}
\end{center}
\end{figure}

\end{proof}

We are now going to prove that the Pareto mechanism we introduced is the
unique way of achieving equilibria truthfulness.

Let $\pi$ be a truthful payment function, i.e. given a set of reported vectors $B$, for any agent $i$, $\pi_i(B)$ is the amount charged to the agent $i$, such that no agent has an incentive to declare a false 
vector. 
Recall that by the MIR constraint, we assume that 
\begin{eqnarray}
b_i \succeq \pi_i(b_{-i},b_i) \mbox{ for any vector } b_i. \label{ineg1}
\end{eqnarray}

\begin{lemma}\label{lem1}
For any two different reported vectors
$b_i$ and $b'_i$, such that $b_i\in MAX(b_{-i}, b_i)$ and 
$b'_i\in MAX(b_{-i}, b'_i)$, either
$\pi_i(b_{-i},b_i)\sim \pi_i(b_{-i},b'_i)$ or
$\pi_i(b_{-i},b_i) = \pi_i(b_{-i},b'_i)$.
\end{lemma}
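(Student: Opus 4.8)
The plan is to argue by contradiction, using the classical principle that the payment charged to a selected (winning) agent must not depend on her own report, transposed here to the domination order. Concretely, I would suppose the conclusion fails, so that the two payments $p:=\pi_i(b_{-i},b_i)$ and $p':=\pi_i(b_{-i},b'_i)$ are comparable and distinct: neither $p\sim p'$ nor $p=p'$ holds. The only remaining possibility is that one strictly dominates the other, so I may assume without loss of generality (swapping the roles of $b_i$ and $b'_i$ otherwise) that $p\prec p'$.

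First I would exhibit a profitable deviation. Consider the situation in which the true value of agent $i$ is $v_i=b'_i$, while the other agents report $b_{-i}$. If agent $i$ reports truthfully, then since $b'_i\in MAX(b_{-i},b'_i)$ by hypothesis she is selected and charged $p'$, so her utility is $u_i=b'_i-p'$. If instead she reports the false vector $b_i$, then because $b_i\in MAX(b_{-i},b_i)$ she is again selected and charged $p$, yielding utility $u'_i=b'_i-p$.

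Next I would derive the contradiction. Since $p\prec p'$ we have $b'_i-p\succ b'_i-p'$, that is $u'_i\succ u_i$; in particular neither $u_i\succeq u'_i$ nor $u_i\sim u'_i$ holds, because $u'_i\succ u_i$ excludes both possibilities. This contradicts the truthfulness of $\pi$: reporting $b_i\neq v_i$ strictly improves agent $i$'s utility over reporting her true value $b'_i$. Hence $p$ and $p'$ cannot be comparable and distinct, which is exactly the claim.

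The argument is short, and the points requiring care are bookkeeping rather than genuine obstacles. The main one is ensuring the deviation keeps the agent selected; this is precisely what the two hypotheses $b_i\in MAX(b_{-i},b_i)$ and $b'_i\in MAX(b_{-i},b'_i)$ guarantee, so that both reports are winning and both utilities take the clean form $v_i-\text{payment}$ rather than $\vec{0}$. I must also keep the direction of the order relations straight when turning $p\prec p'$ into $u'_i\succ u_i$. Note that the MIR constraint~(\ref{ineg1}) is not actually needed to reach the contradiction; it only serves to guarantee that the utilities in play are themselves weakly nonnegative.
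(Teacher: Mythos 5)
Your proposal is correct and follows essentially the same argument as the paper: assume one payment strictly dominates the other, set $v_i$ equal to the bid that yields the larger payment, and observe that deviating to the other bid keeps the agent selected while strictly improving her utility, contradicting truthfulness. The only difference is that you spell out the trichotomy (equal, incomparable, or one dominates) and the order manipulations explicitly, which the paper leaves implicit.
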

\begin{proof}
Let us assume that $\pi_i(b_{-i},b'_i)\succ \pi_i(b_{-i},b_i)$. Then
if $v_i=b'_i$, agent $i$ would have an incentive to report $b_i$ instead
of her true value $v_i$.
The same line of reasoning shows that one cannot have
$\pi(b_{-i},b_i) \succ \pi(b_{-i},b'_i)$ neither. \\$\Box$
\end{proof}

We need additional lemmas. 

\begin{lemma}\label{lem-t01}
Let $S\subset \mathbb{R}^k_{*+}$ be a finite set.
Then, $\forall s\in MAX(S)$, $\exists t\in {\cal T}(S)$ such that
$s \succ t$.
\end{lemma}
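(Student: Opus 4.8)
The plan is to build, for a given $s\in MAX(S)$, an explicit point $t$ that $s$ dominates and that lies in the set $\{t\in\Omega_S : \forall x\in S,\ x\not\gg t\}$ whose $MIN$ equals ${\cal T}(S)$ by Proposition~\ref{prop-ref}; a minimal point below $t$ will then be the sought reference point. The construction must differ from the one in the proof of Lemma~\ref{lem-prop1}: there the relevant set was ${\cal T}(B\setminus\{b_i\})$, so decrementing \emph{every} coordinate of $b_i$ gave a point that no remaining vector strongly dominates. Here the reference set is ${\cal T}(S)$ and $s\in S$, so a fully decremented copy of $s$ would be strongly dominated by $s$ itself and thus excluded. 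The fix I would use is to decrement only one coordinate. Recalling $D_S^j:=\{0\}\cup\{x^j : x\in S\}$ and setting $dec_{S,1}(a):=\max\{x\in D_S^1 : x<a\}$ (well defined since $0\in D_S^1$ and $s^1>0$ as $s\in\mathbb{R}^k_{*+}$), I would take
\[
t:=\bigl(dec_{S,1}(s^1),\,s^2,\,\ldots,\,s^k\bigr).
\]
Each coordinate of $t$ lies in the corresponding $D_S^j$, so $t\in\Omega_S$, and since $t^1<s^1$ while $t^j=s^j$ for $j\geq 2$ we immediately get $s\succ t$.

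The main step, and the one I expect to be the crux, is to verify that no vector of $S$ strongly dominates $t$. Suppose $s'\in S$ satisfies $s'\gg t$. For $j\geq 2$ this reads ${s'}^{j}>t^j=s^j$, while for $j=1$ the inequality ${s'}^{1}>t^1=\max\{x\in D_S^1 : x<s^1\}$ together with ${s'}^{1}\in D_S^1$ forces ${s'}^{1}\geq s^1$ (an element of $D_S^1$ lying strictly below $s^1$ could not exceed their maximum). Hence $s'\succeq s$ with strict inequality in every coordinate $j\geq 2$, i.e. $s'\succ s$, contradicting $s\in MAX(S)$. The single-coordinate decrement is exactly what makes this work: $s$ agrees with $t$ on coordinates $2,\ldots,k$, so $s\not\gg t$ and $s$ does not disqualify $t$; meanwhile any genuinely different strong dominator of $t$ is forced to dominate $s$.

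It then follows that $t\in\{t\in\Omega_S : \forall x\in S,\ x\not\gg t\}$, and as in the proof of Lemma~\ref{lem-prop1} the definition of the $MIN$ operator provides some $w\in MIN(\{t\in\Omega_S:\forall x\in S,\ x\not\gg t\})={\cal T}(S)$ with $w\preceq t$. Finally, from $s\succ t$ and $w\preceq t$ we have $s\succeq w$, and since $s^1>t^1\geq w^1$ the domination is strict, giving $s\succ w$. This $w$ is the desired reference point. The only delicate point throughout is the choice to decrement a single coordinate rather than all of them, balancing ``$s$ dominates $t$'' against ``$s$ does not strongly dominate $t$''; everything else is a direct transcription of the arguments already used for Lemma~\ref{lem-prop1}.
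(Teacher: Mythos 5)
Your proof is correct, but it takes a genuinely different route from the paper's. The paper proves this lemma as a direct corollary of Proposition~\ref{prop-ref2} (the $VMAX$ characterization of ${\cal T}(S)$): for each $s'\in MAX(S)$ with $s'\neq s$ it picks a coordinate $l_{s'}$ where $s'$ falls strictly below $s$ (these exist because maximal vectors are pairwise incomparable), observes that $s$ then dominates $VMAX(\cup_{s'} s'(j_{s'}))$ for the corresponding tuple, and concludes via the $MIN$ in Proposition~\ref{prop-ref2}. You instead work directly from Proposition~\ref{prop-ref}, exhibiting an explicit element of $\{t\in\Omega_S:\forall x\in S,\ x\not\gg t\}$ by decrementing a single coordinate of $s$; your key observation --- that a strong dominator of this $t$ would be forced back up to $s^1$ in the first coordinate and strictly above $s$ in the others, hence would dominate $s$ --- is sound, and your handling of why only one coordinate may be decremented (so that $s$ itself does not strongly dominate $t$) is exactly the right adaptation of the Lemma~\ref{lem-prop1} argument. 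Your approach buys self-containedness: it reuses only Proposition~\ref{prop-ref} and the $dec$ machinery already present in the paper, avoiding the heavier combinatorial formula of Proposition~\ref{prop-ref2}. The paper's approach buys brevity once Proposition~\ref{prop-ref2} is available, since the same proposition also yields Lemmas~\ref{lem-la} and~\ref{lem-lt}. One shared caveat: both arguments implicitly require $k\geq 2$ (you need a coordinate $j\geq 2$ on which $s$ and $t$ agree so that $s\not\gg t$; the paper needs a coordinate $j\neq j_s$ for strictness), which is the standing assumption of the multiobjective setting.
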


\begin{lemma}\label{lem-t5}Let $x,y,z \in \mathbb{R}^k_+$ be three
$k$-dimensional vectors, such that $y\succ x$ and $y\sim z$.
Then, $x\not\succeq z$
\end{lemma}
\begin{proof}
The proof is by contradiction. Assume that $x\succeq z$. Since
$y\sim z$, one can assume without loss of generality that
$y^1 > z^1$ and $y^2 < z^2$.
Since $y\succ x$, one has $y^2 \geq x^2$. Therefore, $z^2 > x^2$, which is
in contradiction with $x\succeq z$.
\end{proof}

\begin{lemma}\label{lem-t4}Let $S\subset \mathbb{R}^k_+$ be a set of
$k$-dimensional vectors, mutually non comparable, i.e.
$\forall s,s'\in S$, one has $s\sim s'$.
Let $p\in \mathbb{R}^k_+$ such that $s^*\succ p$, for some $s^*\in S$.
Then $\forall s\in S$, $s\succ p$ or $s\sim p$.
\end{lemma}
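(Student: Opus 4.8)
The plan is to argue by contradiction, reducing the statement to a short case analysis on the relation between $p$ and an arbitrary $s\in S$. The starting observation is that for any two vectors $x,y\in\mathbb{R}^k_+$ exactly one of the four relations $x=y$, $x\succ y$, $y\succ x$, $x\sim y$ holds; this is immediate from the definitions, since either $x\succeq y$ and $y\succeq x$ (forcing $x=y$), or $x\succeq y$ but not conversely (giving $x\succ y$), or symmetrically $y\succ x$, or neither weak domination holds, in which case there are coordinates witnessing $x\sim y$. Consequently, to establish that every $s\in S$ satisfies $s\succ p$ or $s\sim p$, it suffices to rule out the two remaining possibilities, namely $s=p$ and $p\succ s$.

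First I would dispose of the case $s=p$. If some $s\in S$ equalled $p$, then the hypothesis $s^*\succ p$ would read $s^*\succ s$; since this precludes $s=s^*$, we would obtain two distinct elements $s^*,s\in S$ with $s^*\succ s$, contradicting the assumption that the vectors of $S$ are mutually incomparable.

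The second and main case is $p\succ s$. Here I would invoke transitivity of the domination relation $\succ$: from $s^*\succ p$ and $p\succ s$ one gets $s^*\succeq s$ coordinatewise, and the strict inequality supplied by $p\succ s$ (together with $s^*\succeq p$) upgrades this to $s^*\succ s$. As before, $p\succ s$ combined with $s^*\succ p$ forbids $s=s^*$, so $s^*\succ s$ again contradicts the mutual incomparability of $S$. Having excluded both $s=p$ and $p\succ s$, the four-way case distinction leaves only $s\succ p$ or $s\sim p$, which is the desired conclusion.

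I do not expect a genuine obstacle here: the only point requiring care is the transitivity step and the bookkeeping of the edge case $s=s^*$ (where the hypotheses $s^*\succ p$ and $p\succ s$ are already mutually contradictory). The argument uses nothing beyond the definitions of the domination relations and the hypothesis that $S$ consists of pairwise incomparable vectors.
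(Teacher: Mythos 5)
Your proof is correct and follows essentially the same route as the paper: both arguments reduce to showing $p\not\succeq s$ for each $s\in S$, the paper by invoking its Lemma~\ref{lem-t5} with $x=p$, $y=s^*$, $z=s$, and you by proving the contrapositive of that lemma inline (if $p\succeq s$ then transitivity yields $s^*\succ s$, contradicting the pairwise incomparability of $S$). Your explicit four-way trichotomy just spells out the step the paper leaves implicit when passing from $p\not\succeq s$ to ``$s\succ p$ or $s\sim p$.''
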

\begin{proof}
Let us consider any $s\in S$ with $s\neq s^*$. We know that $s\sim s^*$,
and by using Lemma~\ref{lem-t5} with $x=p$, $y=s^*$ and $z=s$, we obtain
that $p\not\succeq s$, i.e. $s\gg p$ or $s\sim p$.
\end{proof}

\begin{lemma}\label{lem-t2}Let $S\subset \mathbb{R}^k_+$ be a set of
$k$-dimensional vectors, and let $p\in \mathbb{R}^k_+$ such that
$\forall s\in S$, $p\sim s$. Then, there exist $\delta \in \mathbb{R}^k_{*+}$
such that $\forall s\in S$, $p+\delta \sim s$.
\end{lemma}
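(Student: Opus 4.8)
The plan is to exploit the fact that incomparability $p \sim s$ guarantees, for each $s \in S$, both a coordinate in which $p$ strictly loses and a coordinate in which $p$ strictly wins, and then to add a $\delta$ that is strictly positive yet small enough to preserve a losing coordinate for every $s$ simultaneously. Adding a positive $\delta$ can only increase each coordinate of $p$, so every coordinate in which $p$ strictly beats $s$ automatically remains a winning coordinate of $p+\delta$; the only danger is that a small losing gap is overrun, which would turn $p+\delta$ into a dominator of some $s$. Choosing $\delta$ uniformly smaller than all such gaps rules this out.

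Concretely, I would first record, for each $s \in S$, the set $L_s := \{\, j : p^j < s^j \,\}$ of coordinates in which $p$ strictly loses to $s$. Since $p \sim s$, this set is nonempty, and moreover there is a coordinate $j'_s$ with $p^{j'_s} > s^{j'_s}$. Next I would set $\epsilon := \min\{\, s^j - p^j : s \in S,\ j \in L_s \,\}$. Because $S$ is finite and each term $s^j - p^j$ is strictly positive (as $j \in L_s$), this minimum is attained over a nonempty index set and satisfies $\epsilon > 0$. (If $S = \emptyset$ the conclusion is vacuous and any $\delta \gg \vec{0}$ works, so I assume $S \neq \emptyset$.)

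Then I would take $\delta := (\epsilon/2, \ldots, \epsilon/2)$, which clearly satisfies $\delta \gg \vec{0}$, i.e. $\delta \in \mathbb{R}^k_{*+}$. It remains to verify $p+\delta \sim s$ for every $s \in S$. Fix such an $s$ and pick any $j \in L_s$; since $\epsilon/2 < \epsilon \leq s^j - p^j$, we get $(p+\delta)^j = p^j + \epsilon/2 < s^j$, so $p+\delta$ still strictly loses in coordinate $j$. In the winning coordinate we have $(p+\delta)^{j'_s} = p^{j'_s} + \epsilon/2 > p^{j'_s} > s^{j'_s}$, so $p+\delta$ still strictly wins there. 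These two coordinates together witness $p + \delta \sim s$, as required.

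I do not anticipate any real obstacle: the single subtlety is the \emph{uniform} choice of $\epsilon$ across the finitely many pairs $(s,j)$ with $j \in L_s$, which is exactly what yields one $\delta$ working against all $s$ at once. Finiteness of $S$ is precisely what makes this minimum strictly positive, and the monotone effect of adding a positive vector is what keeps the winning coordinates intact for free.
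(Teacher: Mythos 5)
Your proof is correct and follows essentially the same route as the paper's: pick, for each $s$, a losing and a winning coordinate, shrink by half the minimal losing gap, and observe that winning coordinates are preserved automatically when adding a positive vector. The only cosmetic difference is that you use a single global $\epsilon$ over all pairs $(s,j)$ with $j\in L_s$, whereas the paper takes a per-coordinate minimum $\varepsilon^j$ over the vectors whose designated losing coordinate is $j$; both choices are valid and yield the same conclusion.
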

\begin{proof}
Let $S=\{s_1,\ldots ,s_{|S|}\}$. Since $p\sim s$, $\forall s\in S$, it means
that $\forall s_i\in S$, $\exists l_i,l'_i \in \{1,\ldots ,k\}$ such that
$s_i^{l_i} > p^{l_i}$ and $s_i^{l'_i} < p^{l'_i}$.
Let $I_j = \{ i \: | \: l_i = j \}$.
For all $j\in \{1,\ldots ,k\}$, let 
$\varepsilon^j = \min_{i\in I_j} (s_i^{l_i} - p^{l_i})$ if 
$I_j \neq \emptyset$, and let $\varepsilon^j$ be any positive real number 
otherwise.
Then it is easy to see that the vector
$\delta=(\varepsilon^1/2,\ldots ,\varepsilon^k/2)$ satisfies the lemma.
Indeed, observe that $\varepsilon^j>0$ for $1\leq j\leq k$.
Now let us consider any $s_r\in S$ with $1\leq r\leq |S|$.
One has $s_r^{l_r} > p^{l_r}$ and $s_r^{l'_r} < p^{l'_r}$.
Therefore $s_r^{l'_r} < p^{l'_r} + \varepsilon^{l'_r}/2 = (p+\delta)^{l'_r}$.
We are going to show that $s_r^{l_r} > (p+\delta)^{l_r}$.
One has $r\in I_{l_r}$, hence $\varepsilon_{l_r} \leq s_r^{l_r}-p^{l_r}$, and
$\delta^{l_r}=\varepsilon_{l_r}/2 < s_r^{l_r}-p^{l_r}$, which means that
$s_r^{l_r} > (p+\delta)^{l_r}$.
\end{proof}

\begin{lemma}\label{lem-t3}Let $S\subset \mathbb{R}^k_+$ a set of
$k$-dimensional vectors, and let $p\in \mathbb{R}^k_+$ such that
one has $\forall s\in S_1$, $s\sim p$, and one has
$\forall s\in S_2$, $s\succ p$, with $S_1$ and $S_2$ a bipartition of $S$,
i.e. $S_1\cup S_2=S$ and $S_1\cap S_2=\emptyset$.
Then, there exist $\delta\in \mathbb{R}^k_{*+}$ such that 
$\forall s\in S$, $s\sim p+\delta$ or $s\succ p+\delta$.
\end{lemma}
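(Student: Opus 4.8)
The plan is to restate the desired conclusion as a single non-domination condition and then reuse the coordinate-gap construction from the proof of Lemma~\ref{lem-t2}. First I would observe that, for any vector $s$, the statement ``$s\sim p+\delta$ or $s\succ p+\delta$'' is exactly the negation of $p+\delta \succeq s$: since for any two vectors $a,b$ exactly one of the four mutually exclusive relations $a=b$, $a\succ b$, $b\succ a$, $a\sim b$ holds, ruling out $p+\delta = s$ and $p+\delta \succ s$ leaves precisely $s\succ p+\delta$ or $s\sim p+\delta$. Hence it suffices to find $\delta\in\mathbb{R}^k_{*+}$ such that $p+\delta \not\succeq s$ for every $s\in S$, i.e. such that each $s$ possesses at least one coordinate $j$ with $p^j+\delta^j < s^j$.

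Second, I would extract a witness coordinate for each $s$. For $s\in S_1$ we have $s\sim p$, so by the definition of incomparability there is a coordinate $j(s)$ with $s^{j(s)} > p^{j(s)}$. For $s\in S_2$ we have $s\succ p$, so in particular there is again a coordinate $j(s)$ with $s^{j(s)} > p^{j(s)}$. Thus, uniformly over the bipartition, every $s\in S$ admits a coordinate $j(s)$ with strictly positive gap $s^{j(s)} - p^{j(s)} > 0$.

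Third, I would build $\delta$ exactly as in Lemma~\ref{lem-t2}: group the elements by their witness coordinate, setting $I_j=\{s\in S : j(s)=j\}$; for each $j$ let $\varepsilon^j = \min_{s\in I_j}(s^j - p^j) > 0$ when $I_j\neq\emptyset$ and any positive value otherwise; and put $\delta=(\varepsilon^1/2,\ldots,\varepsilon^k/2)\in\mathbb{R}^k_{*+}$. For any $s$, taking $j=j(s)$ gives $\delta^j = \varepsilon^j/2 \le (s^j-p^j)/2 < s^j - p^j$, whence $p^j+\delta^j < s^j$ and therefore $p+\delta\not\succeq s$, which is the required property.

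The only point needing care is the uniform treatment of the two parts of the bipartition in the second step: the conclusion is driven entirely by the existence of a coordinate in which $s$ strictly exceeds $p$, and I must verify that both the incomparable case ($S_1$) and the dominating case ($S_2$) supply such a coordinate. Everything else reduces to the same minimum-then-halve argument already used for Lemma~\ref{lem-t2}, so no genuine obstacle remains; indeed this lemma may be seen as a mild generalization of Lemma~\ref{lem-t2}, where the elements of $S_2$ are handled identically to those of $S_1$ because $\succ$ also guarantees a strictly-dominating coordinate.
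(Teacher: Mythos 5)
Your proof is correct and follows essentially the same approach as the paper: reduce the claim to finding, for each $s$, a coordinate where $s$ strictly exceeds $p+\delta$, then take per-coordinate minima of the positive gaps and halve them. The only difference is cosmetic — the paper defines separate quantities $\varepsilon^j$ for $S_1$ and $\varepsilon'^j$ for $S_2$ and sets $\delta^j=\min\{\varepsilon^j,\varepsilon'^j\}/2$, whereas you observe that both $s\sim p$ and $s\succ p$ supply a strictly-exceeding witness coordinate and so handle the bipartition uniformly, which is a slight streamlining of the same argument.
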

\begin{proof}
This result can be considered as a generalization of Lemma~\ref{lem-t2} 
and the proof is quite similar. 
For the set $S_1$ we define $\epsilon^j$ as previously. For the set
$S_2$, we define $\varepsilon'^j = \min_{\{s\in S_2\, : \, s^j-p^j>0\}} s^j-p^j$
if the set $\{s\in S_2\, : \, s^j-p^j>0\}$ is not empty, otherwise
$\varepsilon'^j$ is any positive value.
Then it is easy to see that the vector
$\delta$ with $\delta^j = \min \{\varepsilon^j, \varepsilon'^j\}/2$,
for $1\leq j\leq k$ satisfies the lemma.
For example, let us consider any $s\in S_2$.
Since $s\succ p$, $\exists l$, $1\leq l\leq k$, such that $s^l>p^l$.
Then $s^l \geq p^l+\varepsilon'^l > p^l + \delta^l$.
It implies that $s\sim p+\delta$ or $s\succ p+\delta$.
If $s\in S_1$ the proof is the same than Lemma~\ref{lem-t2}.
\end{proof}

\begin{lemma}\label{lem-infinity}
Let $z_j$ and $u_j$, for $j\geq 1$, be two infinite sequences of points in
$\mathbb{R}^k_{+}$ such that $z_j \gg z_{j+1}$, $z_j \succeq u_j\succ t$, and
$\lim_{j\to \infty} z_j = t$. Then $\exists l,l'$ such that $u_l \succ u_{l'}$.
\end{lemma}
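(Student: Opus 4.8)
The plan is to first upgrade the hypotheses into genuine convergence of the sequence $(u_j)$, and then to extract a subsequence that is monotone in every coordinate, whose terms are forced to be nested; this nesting immediately yields the desired comparable pair.

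First I would establish that $\lim_{j\to\infty} u_j = t$ by a squeeze argument. From $z_j \succeq u_j \succ t$ we get, coordinate by coordinate, $t^i \le u_j^i \le z_j^i$ for every $i \in \{1,\ldots,k\}$. Since $\lim_{j\to\infty} z_j = t$, each sequence $(z_j^i)_j$ tends to $t^i$, so each $(u_j^i)_j$ tends to $t^i$ as well; hence $u_j \to t$. Note also that $u_j \succ t$ guarantees $u_j \neq t$ for every $j$ (they differ in at least the coordinate where the strict inequality holds).

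Next I would extract a subsequence that is monotone in every coordinate simultaneously. Using the classical fact that every real sequence admits a monotone subsequence, I apply it to the first coordinate, then to the second coordinate inside the resulting subsequence, and so on for all $k$ coordinates; after $k$ steps I obtain a subsequence $(u_{(m)})_m$ in which each coordinate is monotone. The key observation is that along this subsequence no coordinate can be genuinely increasing: if $(u_{(m)}^i)_m$ were non-decreasing, then, being a subsequence of a sequence converging to $t$, it would converge to $t^i$ while staying $\ge t^i$ and non-decreasing, which forces it to equal $t^i$ identically. Thus every coordinate is non-increasing along $(u_{(m)})_m$, i.e. $u_{(1)} \succeq u_{(2)} \succeq \cdots$.

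Finally, I would exhibit the comparable pair. The subsequence cannot be constant, since a constant subsequence would have limit $u_{(1)} \neq t$, contradicting $u_{(m)} \to t$. Hence there are indices $m < m'$ with $u_{(m)} \neq u_{(m')}$; since the subsequence is coordinatewise non-increasing, $u_{(m)} \succeq u_{(m')}$, and the inequality is strict in at least one coordinate, so $u_{(m)} \succ u_{(m')}$. Taking $l$ and $l'$ to be the original indices corresponding to $(m)$ and $(m')$ completes the proof. The main obstacle is the middle step: one must argue carefully that the sandwich together with convergence to $t$ rules out every increasing coordinate, turning each monotone coordinate into a non-increasing one, which is precisely what makes the nested, hence comparable, pair appear.
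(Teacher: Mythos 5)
Your proof is correct, but it follows a genuinely different route from the paper's. The paper argues by cases: if some $u_i$ strongly dominates $t$, it sets $\delta=\min_j (u_i^j-t^j)>0$, uses $\lim z_j=t$ to find an index $N$ with $z_N$ (hence $u_N$, by $z_N\succeq u_N$) within $\delta/2$ of $t$ in every coordinate, and concludes $u_i\succ u_N$ directly; otherwise every $u_i$ ties with $t$ in some coordinate, and the paper passes to a subsequence along which the set of tied coordinates is fixed, then reruns the first case on the remaining, strictly larger coordinates. You instead squeeze $t^i\le u_j^i\le z_j^i$ to get $u_j\to t$, extract a subsequence monotone in every coordinate, observe that a non-decreasing coordinate converging to $t^i$ from above must be constant (hence every coordinate is non-increasing), and then pick two distinct terms of the resulting chain $u_{(1)}\succeq u_{(2)}\succeq\cdots$, which must exist since $u_{(1)}\succ t$ forbids the chain from being the constant sequence $t$. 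Each step of yours checks out, including the key observation that monotone-up plus convergence-to-$t^i$-from-above forces constancy. Your version buys a cleaner, more uniform argument: it yields an entire infinite weakly decreasing chain rather than a single comparable pair, and it replaces the paper's somewhat informal recursion on the non-tied coordinates with one application of the monotone subsequence theorem per coordinate. The paper's version is more elementary and quantitative (explicit $\delta$'s, no appeal to the monotone subsequence theorem), at the cost of the two-case structure and the "repeat the argument on coordinates $K+1,\ldots,k$" step. Neither proof uses the hypothesis $z_j\gg z_{j+1}$, which is apparently only relevant to how the lemma is invoked in Theorem~\ref{theo-all}.
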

\begin{proof}
First, we assume there exist a point $u_i$ such that $u_i\gg t$.
Let $\delta=\min_{1\leq j\leq k} u_i^j - t >0$.
Since $\lim_{j\to \infty} z_j = t$, $\exists N$ such that $\forall i\geq N$
one has $||z_i-t||_{\infty} = \max_{1\leq j\leq k} z_i^j-t^j \leq \delta/2$.
We have $u_i^j \geq t+\delta > t+\delta/2 \geq z_N^j \geq u_N^j$, meaning
that $u_i \succ u_N$.\\
Now assume that $\forall i$, $u_i\not\gg t$.
Since $u_i\succ t$, it means that $\exists \alpha_i$ such that 
$u_i^{\alpha_i}=t^{\alpha_i}$. Without loss of generality by considering 
an (infinite) subsequence of $u_i$ we can assume that $\forall i$, $\alpha_i=1$.
Again by considering an (infinite) subsequence we can assume that
$\exists K$ with $1\leq K\leq k-1$ such that 
$\forall j$, $1\leq j\leq K$, one has $u_i^j =t^j$, and
$\forall j$, $K+1\leq j\leq k$, one has $u_i^j > t^j$. Now we can apply
the same line of reasoning than in the first case, by considering only
the coordinates between $K+1$ and $k$.
\end{proof}

\begin{theorem}\label{theo-all}The Pareto mechanism $\cal M$ gives all the equilibria 
truthful payments.
\end{theorem}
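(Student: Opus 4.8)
The plan is to establish the only nontrivial inclusion: that any MIR payment function $\pi$ which is equilibria truthful must, for every reported profile $B$ with $b_i\in MAX(B)$, charge a vector $p:=\pi_i(b_{-i},b_i)$ lying in $PAY(i)=\{t\in{\cal T}(S)\mid b_i\succeq t\}$, where $S:=B\setminus\{b_i\}$. Combined with Theorem~\ref{theo-eqt}, which shows every such choice is equilibria truthful, this identifies the payments of $\cal M$ with exactly the equilibria truthful payments. I fix $b_{-i}$ throughout, so $S$ and ${\cal T}(S)$ are fixed and $\pi_i(b_{-i},\cdot)$ is a fixed function of agent $i$'s report. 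The bound $b_i\succeq p$ is immediate from the MIR inequality~(\ref{ineg1}), so the heart of the matter is $p\in{\cal T}(S)$. By Proposition~\ref{prop-ref} together with Lemma~\ref{lem-la}, the reference points are precisely the minimal elements of $R:=\{t\mid \forall s\in S,\ s\not\gg t\}$, so it suffices to show that $p\in R$ and that $p$ is minimal in $R$.

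First I would prove feasibility, $p\in R$. Suppose some $s\in S$ satisfies $s\gg p$, and choose $\delta\gg\vec{0}$ small enough that $s\gg p+\delta$. Set agent $i$'s true value to $v_i:=p+\delta$. Reporting truthfully, $v_i$ is strongly dominated by $s$, hence $v_i\notin MAX(b_{-i},v_i)$ and her utility is $\vec{0}$; but by reporting $b_i$ (which wins by hypothesis, with payment $p$) she obtains utility $v_i-p=\delta\succ\vec{0}$. This strictly beneficial deviation contradicts equilibria truthfulness, so $s\not\gg p$ for every $s\in S$, i.e.\ $p\in R$. The very same argument applies to every winning report, so $\pi_i(b_{-i},b)\in R$ whenever $b\in MAX(b_{-i},b)$; I will use this for the approaching bids below.

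Next I would prove minimality. Suppose $p\notin{\cal T}(S)$; since $p\in R$ and the reference points are the minimal elements of $R$, there is some $t^*\in{\cal T}(S)$ with $t^*\prec p$. I would approach $t^*$ from above by a sequence of winning bids $z_j:=t^*+\varepsilon_j$ with $\varepsilon_j\gg\varepsilon_{j+1}\gg\vec{0}$ and $\varepsilon_j\to\vec{0}$; each $z_j\gg t^*$, so $z_j\in MAX(b_{-i},z_j)$ by Definition~\ref{def-t1}, and its payment $u_j:=\pi_i(b_{-i},z_j)$ satisfies $z_j\succeq u_j$ by MIR and $u_j\in R$ by the previous step. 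By Lemma~\ref{lem1} the vectors $p,u_1,u_2,\dots$ are pairwise incomparable or equal. Since $t^*$ is minimal in $R$ and $u_j\in R$, we cannot have $u_j\prec t^*$; and $u_j=t^*$ would give $p\succ u_j$, contradicting Lemma~\ref{lem1}. Hence for every $j$ either $u_j\succ t^*$ or $u_j\sim t^*$. If $u_j\succ t^*$ holds for infinitely many $j$, I pass to that subsequence and apply Lemma~\ref{lem-infinity} with $t=t^*$, $z_j$ and $u_j$ as above, obtaining indices $l,l'$ with $u_l\succ u_{l'}$ — again contradicting Lemma~\ref{lem1} — and the proof is complete.

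The main obstacle is the remaining case, in which $u_j\sim t^*$ for all large $j$: the payments approach the boundary point $t^*$ while staying incomparable to it, so Lemma~\ref{lem-infinity} is not directly applicable. Here I would exploit the perturbation machinery. Passing to a subsequence on which the coordinate sets that lie strictly above and strictly below $t^*$ are constant, the closedness of $R$ together with minimality of $t^*$ forces $u_j\to t^*$; I would then invoke Lemmas~\ref{lem-t2}, \ref{lem-t3} and \ref{lem-t4} (with Lemma~\ref{lem-t5}) to bump either the offending true value or the approaching bids by a suitable $\delta\gg\vec{0}$, so that the relevant payments become strictly above $t^*$ while preserving the winning condition and the incomparability structure of Lemma~\ref{lem1}, thereby reducing this case to the previous one. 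Controlling these incomparabilities cleanly — guaranteeing that each perturbed bid still wins and that Lemma~\ref{lem1}'s pairwise‑incomparable family is left intact — is the delicate point, and is exactly the configuration for which the technical Lemmas~\ref{lem-t2}--\ref{lem-infinity} were prepared.
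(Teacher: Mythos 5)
Your overall plan---show that any equilibria-truthful MIR payment $p=\pi_i(b_{-i},b_i)$ must lie in $R:=\{t \,:\, \forall s\in S,\ s\not\gg t\}$ and must be minimal there---is a legitimate reorganization of the paper's argument. Your feasibility step is correct and in fact replaces the paper's first two cases ($\pi(b_i)$ incomparable to every reference point, or dominated by one) with a single, simpler perturbation; and your subsequence on which $u_j\succ t^*$ is essentially the paper's final case, resolved by Lemma~\ref{lem-infinity} against Lemma~\ref{lem1}. One fact you use without proof---that ${\cal T}(S)=MIN(R)$ and, more to the point, that every $p\in R$ satisfies $p\succeq q$ for some $q\in{\cal T}(S)$, so that a non-reference point of $R$ strictly dominates a reference point---is true but needs the coordinatewise rounding of $p$ down into $\Omega_S$ used in the proof of Proposition~\ref{prop-ref}; it should be stated, especially since (see below) it is also the key to your missing case.

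The genuine gap is the case you flag yourself: $u_j\sim t^*$ for all large $j$. Your proposed fix---bumping the true value or the approaching bids by some $\delta\gg\vec{0}$ via Lemmas~\ref{lem-t2}--\ref{lem-t4}---cannot work as described, because the payment function is adversarial: perturbing $z_j$ or $v_i$ gives you no control over where the resulting payments land, so you cannot force them to become ``strictly above $t^*$''. The paper sidesteps this configuration by a different case split (when $\pi(b_i)\gg t$ it uses the self-referential deviation to the bid $\pi(b_i)$, paying $\pi(\pi(b_i))$, or the intermediate bid $b_i'=\pi(b_i)-(\pi(b_i)-t)/2$ together with Lemmas~\ref{lem1} and~\ref{lem-t5}; in its sequence case it asserts that all the payments $\pi(b_{i,l})$ dominate one and the same reference point, which is exactly what excludes your remaining case). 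You can close the gap directly with the rounding fact above: each $u_j\in R$ satisfies $u_j\succeq q_j$ for some $q_j\in{\cal T}(S)$; since ${\cal T}(S)$ is finite, pass to a subsequence with $q_j=q$ fixed; then $q\preceq u_j\preceq z_j=t^*+\varepsilon_j$ for all $j$ yields $q\preceq t^*$, hence $q=t^*$ by Lemma~\ref{lem-lt}, hence $u_j\succeq t^*$---contradicting $u_j\sim t^*$. With that paragraph inserted your proof is complete; without it, a payment rule that always answers with a vector incomparable to $t^*$ is not excluded.
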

\begin{proof}
The proof is by contradiction. We assume that the payment is computed in a different way than in our Pareto mechanism, and we show that there exists a configuration for which an agent has an incentive to lie. We consider any agent $i$, and any vector $b_i$ such that
$b_i\in MAX(b_{-i}, b_i)$.\\
Let first assume that the payment computed $\pi(b_i)$ is incomparable with 
the set of points ${\cal T}(B\setminus \{b_i\})$, i.e. 
$\forall t\in {\cal T}(B\setminus \{b_i\})$ one has $\pi(b_i)\sim t$.
Using Lemma~\ref{lem-t2}, there exists $\delta\gg \vec{0}$ such that
$\forall t\in {\cal T}(B\setminus \{b_i\})$, $\pi(b_i)+\delta\sim t$.
Let us assume that $v_i=\pi(b_i)+\delta$. Since 
$\forall t\in {\cal T}(B\setminus \{b_i\})$, $v_i\sim t$, we know by
Definition~\ref{def-t1} that 
$v_i\not\in MAX(b_{-i}, v_i)$ or 
$v_i\in MAX(B\setminus \{b_i\})$.
One cannot have $v_i\in MAX(B\setminus \{b_i\})$, because
Lemma~\ref{lem-t01} would contradict the fact that
$\forall t\in {\cal T}(B\setminus \{b_i\})$, $v_i\sim t$.
Therefore one has $v_i\not\in MAX(b_{-i}, v_i)$,
and the utility of agent $i$ is $\vec{0}$ if $i$ reports her true value $v_i$.
If agent $i$ reports $b_i$ her utility will be $u'_i=v_i-\pi(b_i)=\delta\gg \vec{0}$,
meaning that she has an incentive to lie.
This case is illustrated in the Figure~\ref{fig-theo-all} (Case 1).\\
We assume now that the payment computed $\pi(b_i)$ is dominated by
at least one point from ${\cal T}(B\setminus \{b_i\})$.
Using Lemma~\ref{lem-t4} it means that there exists a bipartition 
${\cal T}_1,{\cal T}_2$ of ${\cal T}(B\setminus \{b_i\})$ such that
$\forall t\in {\cal T}_1$, $t \succ \pi(b_i)$ and
$\forall t\in {\cal T}_2$, $t \sim \pi(b_i)$.
Using Lemma~\ref{lem-t3} there exists $\delta\gg \vec{0}$ such that
$\forall t\in {\cal T}(B\setminus \{b_i\})$, $t \succ \pi(b_i)+\delta$ or
$t \sim \pi(b_i)+\delta$.
Let us assume that $v_i=\pi(b_i)+\delta$.  According to the 
Definition~\ref{def-t1}, $v_i\not\in MAX(b_{-i}, v_i)$ or
$v_i\in MAX(B\setminus \{b_i\})$. One cannot have 
$v_i\in MAX(B\setminus \{b_i\})$, because
Lemma~\ref{lem-t01} would contradict the fact that
$\forall t\in {\cal T}_1$, $t \succ v_i$ and
$\forall t\in {\cal T}_2$, $t \sim v_i$.
Therefore $v_i\not\in MAX(b_{-i},v_i)$ and the utility of agent $i$ is 
$\vec{0}$ if $i$ reports her true value $v_i$.
If agent $i$ reports $b_i$ her utility will be $u'_i=v_i-\pi(b_i)=\delta\gg \vec{0}$,
meaning that she has an incentive to lie. This case is illustrated in the
Figure~\ref{fig-theo-all} (Case 2).\\
We assume now that the payment computed $\pi(b_i)$ strictly dominates at least
one point $t$ from ${\cal T}(B\setminus \{b_i\})$.
Now there are two cases to consider, either 
$\pi(\pi(b_i))\neq \pi(b_i)$ or $\pi(\pi(b_i))=\pi(b_i)$.
If $\pi(\pi(b_i))\neq \pi(b_i)$, by using (\ref{ineg1}) one has
$\pi(b_i)\succ \pi(\pi(b_i))$.
If we assume that $v_i=b_i$ then agent $i$ would have an incentive to report 
$\pi(b_i)$ instead of her true value $b_i$. Indeed, since $\pi(b_i)\gg t$,
one can conlude from Definition~\ref{def-t1} that
$\pi(b_i)\in MAX(b_{-i},\pi(b_i))$.
Moreover, agent $i$ would pay $\pi(\pi(b_i))$ instead of $\pi(b_i)$. This case
is illustrated in the Figure~\ref{fig-theo-all} (Case 3).
Assume now that $\pi(\pi(b_i))=\pi(b_i)$. Consider any vector $b'_i$ such that 
$\pi(b_i)\succ b'_i$ and $b'_i\in MAX(b_{-i}, b'_i)$. For example,
one can take $b'_i = \pi(b_i) - (\pi(b_i)-t)/2$ (since $\pi(b_i)\gg t$, one 
has $b'_i\gg t$ and hence by Definition~\ref{def-t1} one has 
$b'_i\in MAX(b_{-i}, b'_i)$).
Using Lemma~\ref{lem1}, either $\pi(b'_i)$ is incomparable with 
$\pi(\pi(b_i))=\pi(b_i)$, or $\pi(b'_i)=\pi(\pi(b_i))=\pi(b_i)$.
If $\pi(b'_i)\sim \pi(b_i)$ then by using Lemma~\ref{lem-t5} with
$x=b'_i$, $y=\pi(b_i)$ and $z=\pi(b'_i)$, one has
$b'_i\not\succeq \pi(b'_i)$. However this contradict the assumption that 
$b'_i\succeq \pi(b'_i)$ (see (\ref{ineg1})).
If $\pi(b'_i) = \pi(b_i)$ then since $\pi(b_i)\succ b'_i$, we have
$\pi(b'_i)\succ b'_i$, and there is again a contradiction with the
assumption (\ref{ineg1}). This case is illustrated in the 
Figure~\ref{fig-theo-all} (Case 4).\\
According to the previous discussion, the only remaining case we need to
consider is when $\pi(b_i)$ dominates, but not strictly dominates, at least 
one point from ${\cal T}(B\setminus \{b_i\})$.
Since $b_i\in MAX(b_{-i}, b_i)$ (and 
$b_i\not\in MAX(B\setminus \{b_i\})$) we know by
Definition~\ref{def-t1} that $\exists t'\in {\cal T}(B\setminus \{b_i\})$
such that $b_i\gg t'$. 
We are going to consider a set of vectors
$b_{i,l}$, with $l\geq 1$, such that $\forall l\geq 1$,
$b_{i,l}\gg b_{i,l+1}$, $b_{i,l}\gg t'$ and $\lim_{l\to \infty} b_{i,l} = t'$.
Since ${\cal T}(B\setminus \{b_i\})$ has a finite number of elements,
we can assume without loss of generality that all the vectors $\pi(b_{i,l})$
dominates the same point $t\in {\cal T}(B\setminus \{b_i\})$.
Recall also that, by inequality~(\ref{ineg1}), 
$b_{i,l} \succeq \pi(b_{-i},b_{i,l})= \pi(b_{i,l})$.
One can easily see that necessarily $t=t'$. Now using Lemma~\ref{lem-infinity}
we obtain that, $\exists l,l'$ such that $\pi(b_{i,l}) \succ \pi(b_{i,l'})$.
If we assume that $v_i=b_{i,l}$ then agent $i$ would have an incentive to report 
$b_{i,l'}$ instead of her true value $b_{i,l}$.
\end{proof}

\begin{figure}[t]
\begin{center}
\subfloat[Case 1]{
\begin{tikzpicture}[scale=.5, every node/.style={transform shape}]
\tikzstyle{sommet}=[circle,draw,fill=black,minimum size=5pt,inner sep=0pt]
\tikzstyle{rect}=[rectangle,draw,fill=black,minimum size=5pt,inner sep=0pt]
\draw[->] (0,0) -- (8,0);
\draw[->] (0,0) -- (0,8);
\draw[step=1cm,gray,thin,dotted] (0,0) grid (8,8);
\draw (1,7) node[sommet]{}; \node[above right] at (1,7) {$b_1$};
\draw (3,5) node[sommet]{}; \node[above right] at (3,5) {$b_2$};
\draw (5,3) node[sommet]{}; \node[above right] at (5,3) {$b_3$};
\draw (7,1) node[sommet]{}; \node[above right] at (7,1) {$b_4$};
\draw (6,6) node[sommet] (B) {}; \node[above right] at (6,6) {$b_i$};
\draw[dashed] (0,6) -- (6,6); \draw[dashed] (6,0) -- (6,6);
\draw (0,7) node[rect]{}; \node[below left] at (0,7) {$t_1$};
\draw (1,5) node[rect]{}; \node[below left] at (1,5) {$t_2$};
\draw (3,3) node[rect]{}; \node[below left] at (3,3) {$t_3$};
\draw (5,1) node[rect]{}; \node[below left] at (5,1) {$t_4$};
\draw (7,0) node[rect]{}; \node[below left] at (7,0) {$t_5$};
\draw (0,7) -- (1,7) -- (1,5) -- (3,5) -- (3,3) -- (5,3) -- (5,1) -- (7,1) -- (7,0);
\draw (3.7,1.7) node[sommet] {}; \node[below left] at (3.7,1.7) {$\pi(b_i)$};
\draw (4.4,2.4) node[sommet] (A) {}; \node[left] at (4.4,2.4) {$v_i = \pi(b_i)+\delta\:$};
\draw[->] (A) to [bend left] (B);
\end{tikzpicture}
}
\subfloat[Case 2]{
\begin{tikzpicture}[scale=.5, every node/.style={transform shape}]
\tikzstyle{sommet}=[circle,draw,fill=black,minimum size=5pt,inner sep=0pt]
\tikzstyle{rect}=[rectangle,draw,fill=black,minimum size=5pt,inner sep=0pt]
\draw[->] (0,0) -- (8,0);
\draw[->] (0,0) -- (0,8);
\draw[step=1cm,gray,thin,dotted] (0,0) grid (8,8);
\draw (1,7) node[sommet] {}; \node[above right] at (1,7) {$b_1$};
\draw (3,5) node[sommet] {}; \node[above right] at (3,5) {$b_2$};
\draw (5,3) node[sommet]{}; \node[above right] at (5,3) {$b_3$};
\draw (7,1) node[sommet]{}; \node[above right] at (7,1) {$b_4$};
\draw (6,6) node[sommet] (B) {}; \node[above right] at (6,6) {$b_i$};
\draw[dashed] (0,6) -- (6,6); \draw[dashed] (6,0) -- (6,6);
\draw (0,7) node[rect] (T1) {}; \node[below left] at (0,7) {$t_1$};
\draw (1,5) node[rect] (T2) {}; \node[below left] at (1,5) {$t_2$};
\draw (3,3) node[rect] (T3) {}; \node[below left] at (3,3) {$t_3$};
\draw (5,1) node[rect] (T4) {}; \node[below left] at (5,1) {$t_4$};
\draw (7,0) node[rect] (T5) {}; \node[below left] at (7,0) {$t_5$};
\draw (0,7) -- (1,7) -- (1,5) -- (3,5) -- (3,3) -- (5,3) -- (5,1) -- (7,1) -- (7,0);
\draw (2,1) node[sommet] {}; \node[below left] at (2,1) {$\pi(b_i)$};
\draw (2.5,1.5) node[sommet] (A) {}; \node[left] at (2.5,1.5) {$v_i = \pi(b_i)+\delta\:$};
\draw[->] (A) to [bend left=10] (B);
\draw[line width=35pt, line cap=round, rounded corners, opacity=0.1] (T1) -- (T2);
\draw[line width=35pt, line cap=round, rounded corners, opacity=0.1] (T3) -- (T4);
\draw[line width=35pt, line cap=round, rounded corners, opacity=0.1] (T5) -- (T5);
\node at (1.5,6.5) {${\cal T}_2$}; \node at (7.5,0.5) {${\cal T}_2$};
\node at (5.5,2) {${\cal T}_1$};
\end{tikzpicture}
}\\
\subfloat[Case 3]{
\begin{tikzpicture}[scale=.5, every node/.style={transform shape}]
\tikzstyle{sommet}=[circle,draw,fill=black,minimum size=5pt,inner sep=0pt]
\tikzstyle{rect}=[rectangle,draw,fill=black,minimum size=5pt,inner sep=0pt]
\draw[->] (0,0) -- (8,0);
\draw[->] (0,0) -- (0,8);
\draw[step=1cm,gray,thin,dotted] (0,0) grid (8,8);
\draw (1,7) node[sommet]{}; \node[above right] at (1,7) {$b_1$};
\draw (5,3) node[sommet]{}; \node[above right] at (5,3) {$b_3$};
\draw (3,5) node[sommet]{}; \node[above right] at (3,5) {$b_2$};
\draw (7,1) node[sommet]{}; \node[above right] at (7,1) {$b_4$};
\draw (6,6) node[sommet] (B) {}; \node[above right] at (6,6) {$v_i=b_i$};
\draw[dashed] (0,6) -- (6,6); \draw[dashed] (6,0) -- (6,6);
\draw (0,7) node[rect]{}; \node[below left] at (0,7) {$t_1$};
\draw (1,5) node[rect]{}; \node[below left] at (1,5) {$t_2$};
\draw (3,3) node[rect]{}; \node[below left] at (3,3) {$t_3$};
\draw (5,1) node[rect]{}; \node[below left] at (5,1) {$t_4$};
\draw (7,0) node[rect]{}; \node[below left] at (7,0) {$t_5$};
\draw (0,7) -- (1,7) -- (1,5) -- (3,5) -- (3,3) -- (5,3) -- (5,1) -- (7,1) -- (7,0);
\draw (4.5,4.5) node[sommet] (pibi) {}; \node[above] at (4.5,4.5) {$\pi(b_i)$};
\draw (2,2) node[sommet] {}; \node[below left] at (2,2) {$\pi(\pi(b_i))$};
\draw[->] (B) to [bend left] (pibi);
\end{tikzpicture}
}
\subfloat[Case 4]{
\begin{tikzpicture}[scale=.5, every node/.style={transform shape}]
\tikzstyle{sommet}=[circle,draw,fill=black,minimum size=5pt,inner sep=0pt]
\tikzstyle{rect}=[rectangle,draw,fill=black,minimum size=5pt,inner sep=0pt]
\draw[->] (0,0) -- (8,0);
\draw[->] (0,0) -- (0,8);
\draw[step=1cm,gray,thin,dotted] (0,0) grid (8,8);
\draw (1,7) node[sommet]{}; \node[above right] at (1,7) {$b_1$};
\draw (5,3) node[sommet]{}; \node[above right] at (5,3) {$b_3$};
\draw (3,5) node[sommet]{}; \node[above right] at (3,5) {$b_2$};
\draw (7,1) node[sommet]{}; \node[above right] at (7,1) {$b_4$};
\draw (6,6) node[sommet] (B) {}; \node[above right] at (6,6) {$b_i$};
\draw[dashed] (0,6) -- (6,6); \draw[dashed] (6,0) -- (6,6);
\draw (0,7) node[rect]{}; \node[below left] at (0,7) {$t_1$};
\draw (1,5) node[rect]{}; \node[below left] at (1,5) {$t_2$};
\draw (3,3) node[rect]{}; \node[below left] at (3,3) {$t_3$};
\draw (5,1) node[rect]{}; \node[below left] at (5,1) {$t_4$};
\draw (7,0) node[rect]{}; \node[below left] at (7,0) {$t_5$};
\draw (0,7) -- (1,7) -- (1,5) -- (3,5) -- (3,3) -- (5,3) -- (5,1) -- (7,1) -- (7,0);
\draw (4.5,4.5) node[sommet] (pibi) {}; \node[right] at (4.5,4.5) {$\:\pi(\pi(b_i))=\pi(b_i)$};
\draw (3.75,3.75) node[sommet] (bip) {}; \node[below] at (3.75,3.75) {$b'_i$};
\draw (4,5) node[sommet] (pibip) {}; \node[above right] at (4,5) {$\pi(b'_i)$};
\end{tikzpicture}
}
\caption{\small \label{fig-theo-all}An illustration of the proof of 
Theorem~\ref{theo-all}.}
\end{center}
\end{figure}
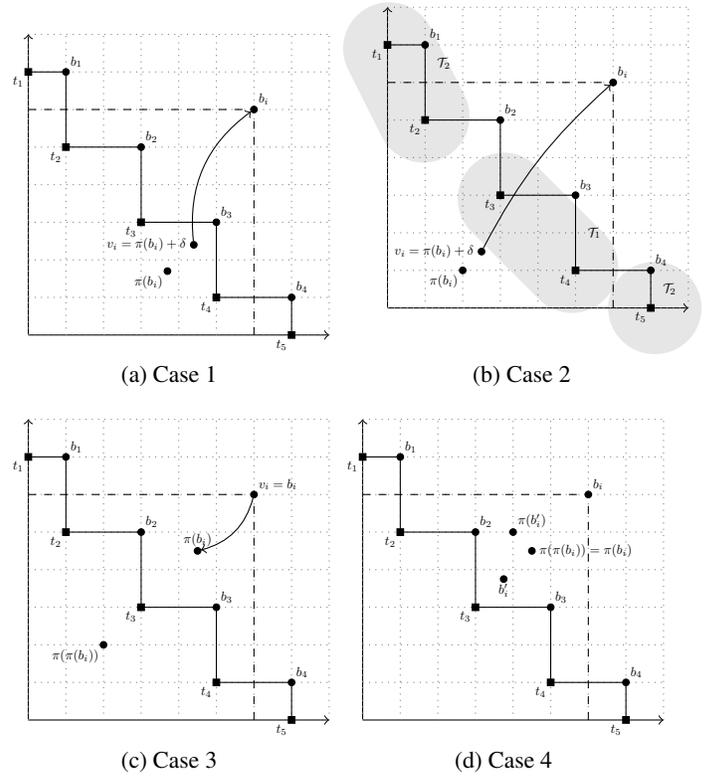

\section{A Pareto mechanism for the Weakly Maximum  Vector  problem}
We are going to present a Pareto mechanism, denoted by $\cal M'$, which 
satisfies the MIR constraint and which is equilibria-truthful.
For doing so, we modify the mechanism ${\cal M}$ in order to
remove the DV condition. 
The modified mechanism is given
in Table~\ref{figpmecarelaxed} and illustrated in 
Figure~\ref{fig-mprime}. It can be shown that $\cup_{i\in W} b_i = WMAX(B)$. 

\bigskip


\begin{algorithm}[h]
\begin{algorithmic}[1]
\STATE Remove all identical vectors and corresponding agents.
\STATE For all $i\in B$, set $PAY(i) := \{t \in  
{\cal T}(MAX(B)\setminus \{b_i\}) \: |\: b_i \succeq t\}$.
\STATE $W$ is the set of agents $i$ such that $PAY(i)\neq \emptyset$.
\STATE For all $i\in W$ choose any $p_i\in PAY(i)$.
\STATE For all $i\notin W$, we set $p_i=\vec{0}$.
\caption{\label{figpmecarelaxed}The weakly Pareto mechanism ${\cal M'}$.}
\end{algorithmic}
\end{algorithm}

\bigskip

\begin{figure}[h]
\begin{center}
{
\begin{tikzpicture}[scale=.8, every node/.style={transform shape}]
\tikzstyle{sommet}=[circle,draw,fill=black,minimum size=5pt,inner sep=0pt]
\tikzstyle{cercle}=[circle,draw,minimum size=9pt,inner sep=0pt]
\tikzstyle{rect}=[rectangle,draw,fill=black,minimum size=5pt,inner sep=0pt]
\draw[->] (0,0) -- (8,0);
\draw[->] (0,0) -- (0,4);
\draw[step=1cm,gray,thin,dotted] (0,0) grid (8,4);
\draw (1,1) node[sommet]{}; \node[below] at (1,1) {$b_1$}; 
\draw (2,1) node[sommet]{}; \node[below] at (2,1) {$b_2$}; 
\draw (4,1) node[sommet]{}; \node[below] at (4,1) {$b_3$}; 
\draw (5,1) node[sommet]{}; \node[below] at (5,1) {$b_4$}; 
\draw (6,1) node[sommet]{}; \node[below] at (6,1) {$b_5$}; 
\draw (6,1) node[cercle]{};
\draw (1,2) node[sommet]{}; \node[left] at (1,2) {$b_6$}; 
\draw (3,2) node[sommet]{}; \node[right] at (3.1,2) {$b_7$};
\draw (3,2) node[cercle]{}; 
\draw (2,3) node[sommet]{}; \node[right] at (2.1,3) {$b_8$}; 
\draw (2,3) node[cercle]{}; 
\draw (0,3) node[rect]{}; \node[above right] at (0,3) {$t_1$}; 
\draw (2,2) node[rect]{}; \node[above right] at (2,2) {$t_2$}; 
\draw (3,1) node[rect]{}; \node[above right] at (3,1) {$t_3$}; 
\draw (6,0) node[rect]{}; \node[above right] at (6,0) {$t_4$}; 
\end{tikzpicture}
} 
\caption{\small \label{fig-mprime}An illustration of the mechanism
${\cal M'}$. One has $MAX(B)=\{b_8,b_7,b_5\}$,
${\cal T}(MAX(B))=\{t_1,t_2,t_3,t_4\}$, and
$WMAX(B)=\{b_3,b_4,b_5,b_7,b_8\}$. Indeed, the vectors
$b_1,b_2$ and $b_6$ do not belong to the weakly Pareto curve since
neither $t_1$, nor $t_2,t_3,t_4$ is dominated by them. In contrary,
$b_3$ and $b_4$ belong to the weakly Pareto curve since they
dominate $t_3$.
}
\end{center}
\end{figure}
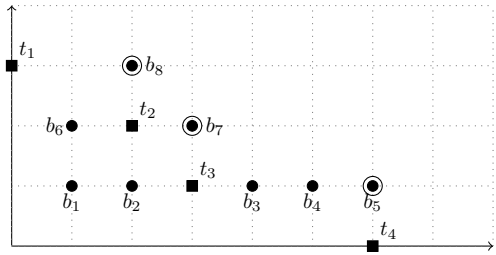


\newpage

\newpage

\appendix

\section{Appendix}
In order to prove Proposition~\ref{prop-ref}, we first need a lemma.

\begin{lemma}\label{lem-p1}
Given $t \in \Omega_S$, one has
$\big( \forall s \in S, \, s \not \gg t \big) \Leftrightarrow \big( \forall s \in MAX(S), \, s \not \gg t \big)$.
\end{lemma}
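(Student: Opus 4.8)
The plan is to prove the biconditional
$$\big(\forall s \in S,\; s \not\gg t\big) \Leftrightarrow \big(\forall s \in MAX(S),\; s \not\gg t\big)$$
by establishing each implication separately. The forward direction is immediate: since $MAX(S) \subseteq S$, if no vector in $S$ strongly dominates $t$, then in particular no vector of the subset $MAX(S)$ does. I would state this in one line.

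The substance is in the converse. Here I would argue by contraposition: assume there exists some $s \in S$ with $s \gg t$, and produce a vector $s_* \in MAX(S)$ with $s_* \gg t$. The idea is that either $s$ is itself nondominated, in which case $s_* = s$ works, or $s$ is dominated by some element of $MAX(S)$. In the latter case I would invoke the standard fact that for any finite set $S$ and any $s \in S$, there exists $s_* \in MAX(S)$ with $s_* \succeq s$ (finiteness guarantees a maximal element above $s$ in the domination order). Then I combine $s_* \succeq s$ with $s \gg t$: for every coordinate $j$ one has $s_*^j \geq s^j > t^j$, hence $s_* \gg t$. This exhibits the desired witness in $MAX(S)$, completing the contrapositive.

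The main obstacle, such as it is, is making the existence of a dominating maximal element $s_* \in MAX(S)$ fully rigorous rather than waving at it. I would handle this by the following short argument: if $s \notin MAX(S)$, then by definition there is some $s' \in S$ with $s' \succ s$; since $S$ is finite, iterating this produces a strictly increasing chain in the domination order $\succ$, which must terminate at some $s_* \in MAX(S)$, and transitivity of $\succeq$ yields $s_* \succeq s$. The only point requiring a touch of care is verifying transitivity and that the chain cannot cycle (it cannot, since $\succ$ is a strict partial order and the set is finite), but both are routine. Everything else is a direct coordinatewise comparison, so I expect the proof to be quite short.
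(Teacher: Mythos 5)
Your proof is correct and follows essentially the same route as the paper's: the forward direction is the trivial inclusion $MAX(S)\subseteq S$, and the converse rests on the fact that every $s\in S$ is weakly dominated by some $s_*\in MAX(S)$, followed by a coordinatewise comparison (you phrase this contrapositively, the paper directly, which is immaterial). If anything, you are slightly more careful than the paper, which asserts the existence of the dominating maximal element "by the definition of $MAX$" where you supply the finite-chain argument.
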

\begin{proof}
\noindent ($\Rightarrow$) Obvious since $MAX(S) \subseteq S$.
\noindent ($\Leftarrow$) Obvious if $S=MAX(S)$, so let assume that
$MAX(S)\subset S$ and let $s \in S \setminus MAX(S)$.
By the definition of $MAX$, there exists $x \in MAX(S)$
such that $x \succeq s$. Therefore $x^j \ge s^j$ for all $j \in K$. 
Since $x \not \gg t$ (by hypothesis) there is $\ell \in K$ such that 
$x^{\ell} \le t^{\ell}$. We deduce that $s^\ell \le t^{\ell}$ and 
$s \not \gg t$ follows.
\ \\$\Box$
\end{proof}

\medskip
\noindent {\bf Proof of Proposition~\ref{prop-ref}:}\\
For any finite set $S\subset \mathbb{R}^k_{*+}$, 
and any $v \in \mathbb{R}^k_{*+} \setminus MAX(S)$, 
we want to proove the following equivalence:
\[ \exists t \in {\cal T}(S) \, : \,  v \gg t \Longleftrightarrow 
v \in MAX(S \cup \{v\}), \]
with ${\cal T}(S) := MIN( \{t \in \Omega_S \, : \, 
\forall s \in S, \; s \not \gg t \}).$\\
$(\Rightarrow)$ Since $v \gg t$, $v^i > t^i$ for all $i \in K$. 
Since $t \in {\cal T}(S)$, $s \not \gg t$ for all 
$s \in MAX(S)$. It means that for all
$s \in MAX(S)$ there exists (at least) one index, 
says $i(s)$, such that $s^{i(s)} \le t^{i(s)}$. 
Then, for all $s \in MAX(S)$, $s^{i(s)} \le t^{i(s)} 
< v^{i(s)}$ holds. It follows that $s \not \succ v $ for all 
$s \in MAX(S)$. Thus, $v \in MAX(S \cup
 \{v\})$.

\medskip

\noindent $(\Leftarrow)$ $v \in MAX(S \cup \{v\})$ 
means that for all $s \in MAX(S)$ either 
$v \sim s$ or $v \succeq s$. Since $v \in 
\mathbb{R}^k_{*+} \setminus MAX(S)$, the case 
$v \succeq s$ becomes $v \succ s$. Both cases imply that, for all 
$s \in MAX(S)$, there exists an index, says 
$i(s)$, such that $v^{i(s)} > s^{i(s)}$. 
Given $a \in \mathbb{R}_{*+}$, let us denote by
$dec_{S,j}(a)$ the quantity $\max\{x \in D_S^j \, : \, x<a\}$, with
$D_S^j:=\{0\} \cup \{s^j \, : \, s\in S\}$ for $j=1,\ldots ,k$.
Now let us consider a 
vector $\omega$ satisfying $\omega^i = dec_{S,i}(v^i)$ for all $i \in K$.
By definition of $dec_{S,i}$, one has $\omega^{i(s)} \ge s^{i(s)}$ for 
all $s \in MAX(S)$. Therefore, 
$s \not \gg \omega$  for all $s \in MAX(S)$. We 
deduce that $\omega \in \{ t \in \Omega_S \, : \, \forall s \in 
MAX(S), \; s \not \gg t \}$. 
By definition of a minimum operator $MIN$,
there exists $q \in MIN( \{ t \in \Omega_S \, : \, 
\forall b \in MAX(S), \; s \not \gg t \})$ such that 
$q \preceq \omega$. 
Using Lemma~\ref{lem-p1}, the set $\{ t \in \Omega_S \, : \, \forall s \in 
MAX(S) \; s \not \gg t \}=\{ t \in \Omega_S \, : \,
 \forall s \in S, \; s \not \gg t \}$. By construction,
$v \gg \omega$ holds. 
Finally, $v  \gg \omega \succeq q$ implies $v \gg q$ where 
$q \in MIN(\{ t \in \Omega_S \, : \, \forall s \in S,
 \; s \not \gg t \})={\cal T}(S)$.
\ \\$\Box$

For $s\in \mathbb{R}^k_{*+}$ and $1\leq j\leq k$, let us define $s(j)$ as the vector 
which has its $j$-th coordinate equals to $s^j$, and 0 elsewhere.
Given a set of vectors $v_1,\ldots ,v_n\in \mathbb{R}^k$, let us define
$VMAX(\{v_1,\ldots ,v_n\})$ as the vector obtained by taking on each coordinate
$j\in \{1,\ldots ,k\}$ the maximum value among the $j$-th coordinate of
$v_1,\ldots ,v_n$, i.e.
$VMAX(\{v_1,\ldots ,v_n\}) = (\max_{1\leq i\leq n} v_i^1, \ldots 
, \max_{1\leq i\leq n} v_i^k)$.

\begin{proposition} \label{prop-ref2}
For any finite set $S\subset \mathbb{R}^k_{*+}$, one has\\
${\cal T}(S) = MIN(\cup_{\{j_s\in \{1,\ldots ,k\} | 
 s\in MAX(S)\}} VMAX(\cup_{s\in MAX(S)} s(j_s))).$
\end{proposition}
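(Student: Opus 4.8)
The plan is to reduce the claim to Proposition~\ref{prop-ref} together with a purely set-theoretic comparison of the two $MIN$-expressions. By Proposition~\ref{prop-ref} and Lemma~\ref{lem-p1}, one has ${\cal T}(S) = MIN(A)$, where $A := \{t \in \Omega_S : \forall s \in MAX(S),\ s \not\gg t\}$. Writing $B$ for the inner union on the right-hand side, i.e. $B := \bigcup_{c} VMAX(\{s(c(s)) : s \in MAX(S)\})$ with $c$ ranging over all maps $MAX(S) \to \{1,\ldots,k\}$, the goal becomes to prove $MIN(A) = MIN(B)$. I would establish this through the two inclusions $MIN(A) \subseteq B \subseteq A$, which immediately yield $MIN(A) = MIN(B)$: every element of $MIN(A)$ lies in $B \subseteq A$ and dominates nothing in $A$, hence nothing in $B$, so it lies in $MIN(B)$; conversely any element of $MIN(B)$ lies in $A$ and, were it not minimal in $A$, would dominate some element of $A$ and therefore some element of $MIN(A) \subseteq B$, a contradiction.

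First I would prove $B \subseteq A$. Fix a map $c$ and set $v_c := VMAX(\{s(c(s)) : s \in MAX(S)\})$, so that its $\ell$-th coordinate is $\max\{s^{\ell} : c(s) = \ell\}$ (taken to be $0$ when no $s$ is sent to $\ell$). Each coordinate of $v_c$ is either $0$ or equals $s^{\ell}$ for some $s \in MAX(S) \subseteq S$, hence $v_c \in \Omega_S$. Moreover, for every $s \in MAX(S)$, putting $\ell = c(s)$ gives $v_c^{\ell} \geq s^{\ell}$, so $s \not\gg v_c$. Thus $v_c \in A$, proving $B \subseteq A$.

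Next I would prove $MIN(A) \subseteq B$. Let $t \in MIN(A)$. For each $s \in MAX(S)$, the condition $s \not\gg t$ furnishes an index $c(s)$ with $t^{c(s)} \geq s^{c(s)}$; this defines a map $c$. For the associated vector $v_c$, every coordinate $\ell$ satisfies $v_c^{\ell} = \max\{s^{\ell} : c(s) = \ell\} \leq t^{\ell}$ (and $v_c^{\ell} = 0 \leq t^{\ell}$ when that index set is empty), so $v_c \preceq t$. Since $v_c \in B \subseteq A$ and $t$ is minimal in $A$, equality $t = v_c$ must hold, whence $t \in B$.

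The two inclusions give $MIN(A) \subseteq B \subseteq A$, and therefore ${\cal T}(S) = MIN(A) = MIN(B)$, which is the asserted identity. The step I expect to require the most care is the sandwiching argument $MIN(A) \subseteq B \subseteq A \Rightarrow MIN(A) = MIN(B)$: it relies on the finiteness of $A$, so that every element of $A$ dominates (in the $\succeq$ sense) some element of $MIN(A)$, and on the transitivity mixing $\succ$ with $\succeq$; both directions of the final equality must be verified separately rather than assumed symmetric.
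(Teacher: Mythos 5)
Your proof is correct, but it takes a genuinely different route from the paper's. The paper proves this proposition from scratch, starting again from Definition~\ref{def-t1}: it encodes the condition ``$v$ is not dominated by any $s\in MAX(S)$'' as a conjunction of disjunctions $\bigwedge_{s}\bigl[(v\gg s(1))\vee\cdots\vee(v\gg s(k))\bigr]$, distributes into disjunctive normal form over all tuples $(j_s)$, and observes that $\bigwedge_s (v\gg s(j_s))$ is equivalent to $v\gg VMAX(\cup_s s(j_s))$; the $MIN$ then just discards dominated disjuncts. That derivation explains where the $VMAX$ formula comes from, but it is informal about why the resulting set coincides with ${\cal T}(S)$ (it implicitly appeals to the uniqueness of the reference-point set). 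You instead anchor everything on the already-proved Proposition~\ref{prop-ref} (via Lemma~\ref{lem-p1}) and reduce the claim to a purely order-theoretic identity $MIN(A)=MIN(B)$, established through the sandwich $MIN(A)\subseteq B\subseteq A$. Both inclusions check out: $v_c\in\Omega_S$ with $s\not\gg v_c$ for all $s\in MAX(S)$ gives $B\subseteq A$, and the choice of $c(s)$ witnessing $s\not\gg t$ gives $v_c\preceq t$, forcing $t=v_c$ by minimality. Your closing sandwich lemma, using finiteness of $A\subseteq\Omega_S$ and the transitivity of $\succ$ composed with $\succeq$, is exactly the rigorous version of the paper's terse final remark that dominated disjuncts can be dropped. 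The trade-off: the paper's argument is more illuminating about the combinatorial origin of the $VMAX$ construction, while yours is more modular and makes the equality of the two $MIN$-sets airtight.
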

\begin{proof}
Let us first describe another way to compute the set of reference points which will
be useful to prove some properties on it.
Let $S=\{s_1,\ldots ,s_n\} \subset \mathbb{R}^k_{*+}$, and let us consider 
a vector $v\in \mathbb{R}^k_{*+}\setminus MAX(S)$ such that 
$v\in MAX(S\cup \{v\})$. Of course $v\not\in S$, and
it means that $\forall s\in S$, $v$ is not dominated by $s$, i.e.
$s\not \succ v$ and $s\neq v$, i.e. $\exists j$, $1\leq j\leq k$, such that 
$v^j>s^j$.
We define a boolean formula $\psi$ indexed by the vector $s$, in the 
following way: $\psi(s) := (v\gg s(1)) \vee (v\gg s(2)) \vee 
\ldots \vee (v\gg s(k))$.
Since we know that $v\gg 0$, we get that $v$ is not dominated by $s$ if and
only if the boolean formula $\psi(s)$ is true.
Finally, observe that $\forall s\in S$, $v$ is not dominated by $s$, is
equivalent with $\forall s\in MAX(S)$, $v$ is not dominated by $s$.
Therefore, one has $v\in MAX(S\cup \{v\})$ if and only if
$\psi := \bigwedge_{s\in MAX(S)} \psi(s)$ is true, i.e.
$$\psi := \bigwedge_{s\in MAX(S)} 
  [ (v\gg s(1)) \vee (v\gg s(2)) \vee \ldots \vee (v\gg s(k)) ]$$
is true.

We can rewrite $\psi$ as

$$\psi := \bigvee_{\{j_s\in \{1,\ldots ,k\} | s\in MAX(S)\}} \
  \bigwedge_{s\in MAX(S)} (v\gg s(j_s)).$$

In this formula, the $\vee$ is taken over all tuples
$(j_s)_{s\in MAX(S)}$, with $j_s\in \{1,\ldots ,k\}$ for 
each $s\in MAX(S)$, there are therefore 
$k^{|{\cal P}_{max}(S)|}$ such tuples.

Clearly, one has $\bigwedge_{s\in MAX(S)} (v\gg s(j_s)) \iff v\gg VMAX(\cup_{s\in MAX(S)} 
s(j_s))$, therefore we can write
$$\psi := \bigvee_{\{j_s\in \{1,\ldots ,k\} | s\in MAX(S)\}} 
(v\gg VMAX(\cup_{s\in MAX(S)} \: s(j_s))).$$

Now observe that if $v_1\succ v_2$, clearly one has
$v\gg v_2$ if and only if $(v\gg v_1) \vee (v\gg v_2)$.
We therefore have proved the proposition.
\ \\$\Box$
\end{proof}

\medskip
\noindent {\bf Proof of Lemma~\ref{lem-la} and \ref{lem-lt}:}\\
It is a direct consequence of Proposition~\ref{prop-ref2}.

\medskip
\noindent {\bf Proof of Lemma~\ref{lem-t01}:}\\
Let $s\in MAX(S)$. Then $\forall s'\in MAX(S)$,
$s' \neq s$, $\exists l_{s'}$,
$1\leq l_{s'}\leq k$, such that
$(s')^{l_{s'}} < s^{l_{s'}}$. If we take $j_{s'}=l_{s'}$ for $s'\neq s$, and
$j_s$ any integer value between 1 and $k$, we have
$\forall s'\in MAX(S)$, $s\succ s'(j_{s'})$, and therefore
$s\succ VMAX(\cup_{s'\in MAX(S)} s'(j_{s'}))$.
The result now follows from Proposition~\ref{prop-ref2}.
\ \\$\Box$


\bibliographystyle{named}
\bibliography{ijcai19}

\begin{thebibliography}{99}
 \bibitem{Bellosta} M-J. Bellosta, I. Brigui, S. Kornman, D. Vanderpooten, 
\emph{A multi-criteria model for electronic auctions}.
SAC'04.

\bibitem{Bellosta2} M-J. Bellosta, S. Kornman, D. Vanderpooten,
\emph{Preference-based English reverse auctions},
Artificial Intelligence, Volume 175, Issues 7–8, 2011.

\bibitem{Benabbou} N. Benabbou, P. Perny, \emph{Combining preference elicitation 	and search in multiobjective state-space graphs}, 297--303, IJCAI 2015.

\bibitem{Bichler} M. Bichler, \emph{The Future of e-Commerce--Multidimensional Market Mechanisms}, Cambridge University Press, Cambridge, UK, 2001.

\bibitem{Bichler2} M. Bichler, \emph{An experimental analysis of multi-attribute auctions}, Decision Support Systems, 29(3), 2000.

\bibitem{Bichler3} M. Bichler, M. Kaukal, A. Segev, \emph{Multi-attribute auctions for electronic procurement, } In Proc. of the First 
IBM IAC Workshop on Internet Based Negotiation Technologies, Yorktown Heights, NY, 1999.


\bibitem{Bilo} D. Bil{\`o}, L. Gual{\`a}, G. Proietti,
\emph{Designing a Truthful Mechanism for a Spanning Arborescence Bicriteria Problem}, CAAN'2006, 19--30, 2006.

\bibitem{Branco} F. Branco, \emph{The design of multi-dimensional auctions}, Rand Journal of Economics 28, 63--81, 1997.

\bibitem{Che} Y.K. Che, Design competition through multi-dimensional auctions, Rand Journal of Economics, 24, 668-679, 1992.

\bibitem{Vander} K. Dächert, K. Klamroth, R. Lacour, D.Vanderpooten,
\emph{Efficient computation of the search region in multi-objective optimization}, European Journal of Operational Research, 260(3), 841-855, 2017.

\bibitem{Dasgupta} P. Dasgupta, P.P. Chakrabarti, S.C. DeSarkar, \emph{Searching game trees under a partial order}, Artificial Intelligence, 82, 237-257, 1996.

\bibitem{D16} C. Daskalakis,
{\it Reducing Mechanism Design to Optimization}. Presentation in HALG 2016.

\bibitem{Smet} Y. De Smet, 
\emph{Multi-criteria auctions without full comparability of bids}.
European Journal of Operations Research (177), 1433-1452, 2007.

\bibitem{desmet-new}  Y. De Smet, \emph{Multi-criteria auctions: a few basics}, 
Technical Report CODE.

\bibitem{Donald} B. R. Donald, \emph{A search algorithm for motion planning with six degrees of freedom},
Artificial Intelligence, 31, 295--353, 1987.

\bibitem{Ehr00} M. Ehrgott, {\it Multicriteria Optimization}. Springer, 2004.

\bibitem{Grandoni} F. Grandoni, P. Krysta, S. Leonardi, C. Ventre,
\emph{Utilitarian Mechanism Design for Multiobjective Optimization}. SIAM J. Comput. 43(4): 1263-1290 (2014)

\bibitem{Hart} P.E. Hart, N.J. Nilsson, B. Raphael, \emph{A formal basis for the heuristic determination of minimum cost paths}, IEEE Trans. Syst. and Cyb., SSC-4 (2), 100--107, 1968.

\bibitem{KLP75}
 H. T. Kung, F. Luccio, F. P. Preparata,
{\it On Finding the Maxima of a Set of Vectors}. J. ACM 22(4): 469-476 (1975)

\bibitem{KRR}
D. Kossmann, F. Ramsak, S. Rost,
{\it Shooting Stars in the Sky: An Online Algorithm for Skyline Queries}. VLDB 2002: 275-286

\bibitem{Khouadjia}	M. R. Khouadjia, M. Schoenauer, V. Vidal, J. Dréo, P. Savéant,
\emph{Pareto-Based Multiobjective AI Planning}, IJCAI 2013: 2321-2327

\bibitem{Mandow} L. Mandow, J.L. P{\'e}rez de la Cruz, \emph{A new approach to multiobjective {$A^*$} search},
IJCAI'05, 218--223, 2005.

 \bibitem{M74}
D. McLain, 
{\it Drawing Contours from Arbitrary Data Points}. Computer Journal 17(4), 1974.

\bibitem{IntroMD} N. Nisan,
\emph{Introduction to mechanism design (for computer scientists)}.
In Algorithmic Game Theory, N. Nisan, T. Roughgarden, E. Tardos, V. Vazirani,
Cambridge University Press, Chapter 9, 209--241, 2007.

 \bibitem{NR99}
 N. Nisan, A. Ronen:
{\it Algorithmic Mechanism Design (Extended Abstract)}. STOC 1999: 129-140.

\bibitem{PTFS03}
D. Papadias, Y. Tao, G. Fu, B. Seeger,
An Optimal and Progressive Algorithm for Skyline Queries. SIGMOD Conference 2003: 467-478

\bibitem{PY} Ch. H. Papadimitriou, M. Yannakakis,
{\it On the Approximability of Trade-offs and Optimal Access of Web Sources}. FOCS 2000: 86-92.

\bibitem{PY2}
Ch. H. Papadimitriou, M. Yannakakis,
{\it Multiobjective Query Optimization}. PODS 2001.

\bibitem{Quemy}	A. Quemy, M. Schoenauer,
\emph{True Pareto Fronts for Multi-objective AI Planning Instances}, EvoCOP 2015: 197-208.

\bibitem{Stewart} B. S. Stewart, C. C. White III:
\emph{Multiobjective A*}, J. ACM 38(4): 775-814 (1991)



\bibitem{S86}
R. Steuer,
{\it Multiple Criteria Optimization},
Wiley, New York 1986.

\bibitem{Vickrey} W. Vickrey,
\emph{Counterspeculation, auctions and competitive sealed tenders}.
J. Finance, 8--37, 1961.

\bibitem{Zeleny} M. Zeleny, \emph{Multiple Criteria Decision Making}, McGraw-Hill, New York, 1982.
 \end{thebibliography}

\end{document}